\renewcommand{\baselinestretch}{1.2}
\newcommand{\Indic}{\mathbf{I}}
\newcommand{\pa}{\partial}         % - \pa: simbolo partial
\newcommand{\al}{\alpha}
\newcommand{\ga}{\gamma}
\renewcommand{\th}{\theta}
\newcommand{\convp}{\stackrel{p}\rightarrow}
\newcommand{\convd}{\stackrel{d}\rightarrow}
\newtheorem{cor}{Corollary}
\newtheorem{theorem}{Theorem}
\newtheorem{lemma}{Lemma}
\newtheorem{ass}{Assumption}
\newtheorem{remark}{Remark}
\newcommand\argmax[1]{\underset{#1}{\operatorname{argmax}}\;}
\def\baselinestretch{1.2}
\begin{document}

\title{Specification testing with grouped fixed effects \footnote{We
    are grateful to Francesco Bartolucci, Jeffrey Campbell, Giuseppe Cavaliere, Pavel \v{C}i\v{z}ek, Andreas Dzemski, Arturas Juodis, Riccardo Lucchetti, Elena Manresa, Chris Muris, Silvia Sarpietro Laura Serlenga, Amrei
    Stammann, Alexandros Theoludis, to the audience at the 28th
    International Panel Data Conference, at the 11th Italian Congress of Econometrics and Empirical Economics, and at the SEG seminar
    their helpful comments and suggestions. We also thank Carolina
    Castagnetti and Federico Belotti for generously sharing their
    codes.}}
\author{Claudia Pigini\footnote{Marche Polytechnic University
    (Italy).  Corresponding Author. Address: Department of Economics and Social Sciences, P.le Martelli 8, 60121, Ancona (Italy).E-mail: \url{c.pigini@univpm.it}} \and Alessandro
  Pionati\footnote{Marche Polytechnic University (Italy). E-mail:
    \url{a.pionati@univpm.it}} \and Francesco Valentini
  \footnote{University of Pisa (Italy). E-mail: \url{francesco.valentini@unipi.it}}}

\date{}
\maketitle
%\vspace*{-0.5cm}

\def\baselinestretch{1.8}

%\vspace*{-0.5cm}
\begin{abstract}
  We propose a Hausman test for the correct specification of
  unobserved heterogeneity in both linear and nonlinear fixed-effects
  panel data models. The null hypothesis is that heterogeneity is
  either time-invariant or, symmetrically, described by homogeneous
  time effects. We contrast the standard one-way fixed-effects
  estimator with the recently developed two-way grouped fixed-effects
  estimator, that is consistent in the presence of time-varying
  heterogeneity (or heterogeneous time effects) under minimal
  specification and distributional assumptions for the unobserved
  effects. The Hausman test compares jackknife corrected estimators,
  removing the leading term of the incidental parameters and
  approximation biases, and exploits bootstrap to obtain the variance
  of the vector of contrasts. We provide Monte Carlo evidence on the size and
power properties of the test and illustrate its application in
  two empirical settings.
 \end{abstract}
  \vskip3mm
  \noindent {\bf Keywords:} {\sc Group fixed effects,
    Hausman test, Jackknife bias correction, Parametric bootstrap, Time-varying heterogeneity}
  \noindent \vskip3mm \noindent {\bf JEL Classification:} {\sc C12, C23, C25}

 \doublespacing 
%%%%%%%%%%%%%%%%%%%%%%%%%%%%%%%%%%%%%%%%%%%%

\section{Introduction}

Correct specification of unobserved heterogeneity is crucial in panel
data modeling. For long, empirical applications have only considered
individual time-constant fixed effects, but the assumption of
time-invariant unobserved heterogeneity is often hardly tenable in
practice, especially over a long time dimension. Therefore the current
mainstream approach includes both subject- and time-specific intercepts, in
order to achieve credible identification of the effects of interest.
The simplest and most widely employed setup is the specification of
additive individual and time heterogeneity, namely the two-way
fixed-effects model, that in the linear model is equivalent to the
two-way correlated random effects approach \citep{wooldridge2021}. For
nonlinear models with additive fixed effects, \cite{FVW2016} provide
analytical and jackknife bias corrections for the maximum likelihood
(ML) estimator, which is plagued by the
incidental parameters problem.

While of simple implementation, the two-way fixed-effects
specification fails to capture the specific impact common
factors may have on each subject. There is now an important stream of
literature focused on developing identification results and estimation
strategies for models with interactive time and individual fixed
effects. Contributions have been spurred by the seminal paper by
\cite{Bai2009}, who provided identification results along with the
asymptotics for the interactive fixed-effects estimator in linear
models. More recently, interactive fixed-effects have been introduced
in nonlinear panel data and network models by \cite{chenfvweidner2021}.

Testing the assumptions on the unobserved heterogeneity specification
has also received considerable attention in the recent econometric
literature. \cite{BBP2015} propose a Hausman-type test for the
null hypothesis of time-constant unobserved heterogeneity in generalized
linear models, where conditional ML estimators are compared with
first-differences or pairwise conditional ML estimators. In the
context of large stationary panel models, the factor specification
could be tested by comparing additive to interactive fixed-effects
models, on the basis of the Hausman test suggested by \cite{Bai2009}
and its fixed-$T$ version, derived by \cite{Westerlund2019}. However,
it has been shown that the Hausman-type test fails to reject the null
hypothesis when individual factor loadings are independent across
equations \citep{CRT2015}. On this basis, \cite{kapetanios2023testing}
use a Hausman-type test contrasting additive and interactive
fixed-effects to detect such correlation, whereas
\cite{castagnetti2015} overcome the issue by proposing an alternative
max-type test for the null hypothesis of time-invariant unobserved
heterogeneity.

Despite its increasing popularity, the interactive effects approach
based on \citeauthor{Bai2009}'s procedure comes with some non trivial
issues. First, estimation relies on solving a non convex objective
function with possibly multiple minima
\citep{moon2023nuclear}. Secondly, the reliability of the iterative
procedure crucially depends on the consistency of the parameter
estimates chosen as the starting point for the algorithm
\citep{hsiao2018panel}. These drawbacks might be even more hampering
when nonlinear models with interacted fixed effects are involved
\citep[see][]{chenfvweidner2021}. In light of these considerations, a
simpler specification might therefore be preferable, provided it gives
a good enough representation of the structure of the unobserved
heterogeneity.

In this paper we propose a Hausman test for the fixed-effects
specification, in both linear and nonlinear models and where the
unobserved heterogeneity, under the null hypothesis, is only
individual or time-specific. The test contrasts One-Way Fixed Effects
(OW-FE henceforth) ML estimators with the Two-Way Grouped Fixed
Effects (TW-GFE henceforth) approach, recently put forward by
\cite{blm2022}. The TW-GFE approach is based on a first-step
data-driven approximation of the unobserved heterogeneity, which is
clustered by the {\em kmeans} algorithm using individual and
time-series sample moments to assign individual and time group
memberships. Cluster dummies are then interacted and enter the model
specification as group effects, and the associated parameters are
estimated along with the regression coefficients in the second
step. The resulting second-step estimator is consistent in the
presence of unspecified forms of the time-varying unobserved
heterogeneity with minimal assumptions on the unobserved components,
which makes it a perfect candidate to contrast with the OW-FE
estimator that is consistent with only time-constant (or
time-specific) heterogeneity. Note that, in order to perform the
proposed test, there is no need to estimate interactive fixed effects
models, as the TW-GFE encompasses this as well as more sophisticated
specifications for the unobserved heterogeneity.

We show that, under the null hypothesis of one-way effects, the TW-GFE
estimator converges to the OW-GFE, for which the asymptotic framework
has been established by \cite{blm2022}, implying that the Hausman
statistic \citep{Hausman1978} has asymptotic $\chi^2$
distribution. There are, however, two sources of asymptotic bias: the
incidental parameters problem, that in nonlinear models plagues both
estimators, and the approximation bias, that affects the TW-GFE. We
tackle the resulting non-centrality by reducing the bias of the vector
of constrasts via leave-one-out jackknife \citep{HN2004}. In addition,
we rely on parametric bootstrap \citep{mackinnon2006,
  horowitz2019bootstrap} to estimate the variance of the debiased
vector of contrasts.

In the linear model, our test can also be used to detect violations of
two-way additive individual- and time-specific heterogeneity, as
one-sided demeaning reconciles this setting with the specification of
one-way heterogeneity of our null hypothesis. In this framework,
additional assumptions on one of the heretogeneity components and a
modification of the moment functions used the the first-step
clustering are required. One potential application is the pre-trend
test to verify the crucial idenfying assumption of pre-treatment
common trends when applying Difference-in-Differences. In this
respect, the proposed test finds natural application in contexts where
the generating model for potential outcomes has an interactive fixed
effects structure \citep{callaway2023treatment, bai2024causal}.

We report the results of an extensive Monte Carlo study showing
evidence that the test has correct size and good power. Size
properties crucially depend on how effective is the clustering
procedure in approximating the unobserved heterogeneity for the
TW-GFE, that is choosing a sufficiently large number of groups and
ensuring that the moments used for the {\em kmeans} clustering are
informative about the latent traits and common factors. Power
properties are studied under the alternative hypothesis of a factor
structure.  While computationally more intensive than the testing
procedures put forward by \cite{castagnetti2015} and \cite{BBP2015},
the proposed test represents an improvement as the former can only be
applied to linear models in a large-$T$ framework and the latter,
while viable for generalized linear models admitting sufficient
statistics for incidental parameters, lacks power when time effects
are independent.

 We also provide two empirical applications for the
proposed test. The first concerns a linear model for the determinants
of housing prices in the U.S. : the test rejects the null
hypothesis of time-invariant heterogeneity, in line with  the literature suggesting
a factor structure for unobservable traits  \citep{holly2010spatio}.
 In the second application we analyze
the inter-temporal decisions of labor market participation for female
workers, revisiting the application in \cite{dj2015}, among others.  The test does not provide evidence of a more
complex structure for the unobserved heterogeneity, as it fails to
reject the null hypothesis of time-variant latent traits.

\subsection{Literature review}
This paper relates to the stream of
literature that has studied fixed-effects panel data models with
grouped structures for the unobserved heterogeneity. Discrete
heterogeneity has long been considered within the random-effects
approach \citep{Heckman1984}, especially by a large body of
statistical literature; see, for instance, \cite{MacLahlan2000} on
finite-mixture models and \cite{Bartolucci2012} on latent Markov
models. On the contrary, the investigation of grouped patterns of
heterogeneity in fixed-effects models is relatively recent in the
econometric literature.

\cite{hahnmoon2010} study the asymptotic bias arising form the
incidental parameters problem in nonlinear panel data models where
unobserved heterogeneity is assumed to be discrete with a finite
number of support points.  \cite{besterhansen2016} investigate the
asymptotic behavior of the ML estimator for nonlinear models with
grouped effects, under the assumption that subjects are clustered
according to some external known classification.  Models with unknown
grouped membership are studied by \cite{phillips2016}, who propose
penalized techniques for the estimation of models where regularization
by classifier-Lasso shrinks individual effects to group coefficients,
by \cite{andobai2016, ando2023large} who consider unobserved group
factor structures in (generalized) linear models with interactive
fixed effects, and finally by \cite{wang2023, lumsdaine2023estimation},
studying group structures combined with structural breaks.

Discrete unobserved heterogeneity can serve as a regularization device
that allows to identify the parameters of interest in panel data
models with time-varying individual effects but not necessarily
characterized by a factor structure. In this vein, \cite{bm2015}
introduce a GFE estimator for linear models where the discrete
heterogeneity is assumed to follow time-varying grouped patterns and
cluster membership is left unrestricted. By contrast, the TW-GFE
estimator by \cite{blm2022} is consistent even with unspecified forms
of time-varying unobserved heterogeneity. While using discretization
as an approximation device introduces an asymptotic bias, the function
of the unobserved heterogeneity they consider encompasses a variety of
specifications, such as additive and interactive effects, under
minimal distributional assumptions.  To the best of our knowledge, the
only alternative approach based on approximating heterogeneity is by 
\cite{freemanweidner2023}, which is however viable only for the linear model.

\subsection{Outline}
 The rest of the paper is organized as follows: Section
\ref{background} briefly describes the models and estimators; Section
\ref{sec:asym} reviews the assumptions required to
characterize the asymptotic distribution of the TW-GFE  and outlines the main theoretical contribution;
 Section \ref{specification tests} illustrates
the proposed test  and the asymptotic behavior of the resulting
test statistic, and briefly illustrates the alternative
testing procedures; Section \ref{simulation} presents the results of
the simulation study in both linear and probit cases; Section \ref{empirical} illustrates
the two empirical applications; Finally, Section \ref{conclusion}
concludes.

\section{Models, estimators, and intuition of the proposed test}
\label{background}

Consider a panel data setup where subjects are indexed by
$i=1, \ldots, N$ and time occasions are indexed by $t=1, \ldots,
T$. Throughout the paper, we assume that observations are independent,
conditional on the observed covariates and unobserved heterogeneity,
and that the models are {\em static}.
The traditional specification of OW-FE models depicts
unobserved heterogeneity as individual-specific intercepts, so that
the conditional distribution of the response variable $y_{it}$ given
an $r$-vector of exogenous covariates $x_{it}$ is of the type
\begin{equation}\label{eq:ti}
  y_{it}| x_{it},  \th_0, \alpha_{i0} \sim  f(y_{it} | x_{it}' \th_0 + \alpha_{i0}), \quad
\end{equation}
where $ \th_0 $ is the vector of parameters of interest, $\alpha_{i0}$
denotes the individual-specific time-invariant effect, and $f(\cdot)$
is a generic known density function. When \eqref{eq:ti} is a linear
regression model, consistent OLS estimators of $\th_0$ can be
trivially obtained on the basis of standard de-meaning or
first-differences transformations, whereas ML estimators in non-linear
models are not consistent in $T$ is fixed, unless probability
formulations admit sufficient statistics for the individual intercepts
\citep{and:70, Chamberlain1980}. Instead, under rectangular array
asymptotics \citep{Li2003}, the ML estimator is consistent but
exhibits a bias in the limiting distribution, which can be reduced by
analytical or jackknife corrections \citep{HN2004, AH2007}.

In this paper, we contrast the TW-GFE  with the OW-FE
estimator to perform specification tests and possibly detect
more sophisticated structures for the unobserved heterogeneity.
Consider the following model formulation
\begin{equation}\label{eq:tv}
  y_{it}| x_{it},  \th_0, \alpha_{it0}  \sim  f(y_{it} | 
  x_{it}' \th_0 + \alpha_{it0}).
\end{equation}
The time-varying unobserved heterogeneity $\alpha_{it0}$ is
characterized by two vectors $ \xi_{i0}$ and $ \lambda_{t0}$, and a
function
$\alpha$$(\cdot)$, satisfying requirements that will be discussed
later in more detail, such that $\alpha_{it0} = \alpha(
\xi_{i0},\lambda_{t0})$. This characterization of
$\alpha_{it0}$ follows that of \cite{blm2022} and can be easily
reconciled with the structures for the unobserved heterogeneity in
models \eqref{eq:ti} and \eqref{eq:tv}:
\begin{equation*}
  \alpha_{it0} : \left\{ \begin{array}{lc}
                           \alpha_{i0} \equiv \alpha(\xi_{i0})  & 
                                                                  \mathrm{in
                                                                  \quad
                                                                  \eqref{eq:ti}}\\[2pt]
                                                                % \alpha_{i0} + \zeta_{t0} \equiv
                                                                % \alpha(\xi_{i0}, \lambda_{t0}) &
                                                                % \mathrm{in \quad \eqref{eq:add}}\\[2pt]
                           \alpha_{it0} \equiv \alpha(\xi_{i0},
                           \lambda_{t0})
                                                                &
                                                                  \mathrm{in \quad \eqref{eq:tv}}\\
                         \end{array} \right.
                     \end{equation*}
                     where $\alpha_{i0}$ depends on individual traits only. 
                     It is worth noting that the formulation for
                     time-varying heterogeneity accomodates the
                     widespread approach based on including common
                     time effects that enter the specification in an
                     additive manner, i.e., $ \alpha_{i0} + \zeta_{t0}
                     \equiv \alpha(\xi_{i0},
                     \lambda_{t0})$ where
                     $\zeta_{t0}$ represents such time-varying
                     heterogeneity.

                     As for the TW-GFE approach, the estimator is
                     obtained via a two-steps procedure. The
                     first-step deals with the classification of
                     subjects and time occasions into two different
                     sets of groups. It is worth to stress that
                     clustering here serves as an approximation tool
                     for the unobserved heterogeneity, so that there
                     is no number of clusters to be known \emph{a
                       priori}. As a consequence, groups should not be
                     intended as aggregation levels coming from
                     external information \citep[e.g. sectors for
                     firms, see also][]{papke2023simple}.
                     Classification relies on performing {\em kmeans}
                     clustering twice, using the vectors of sample
                     moments
                     $ h_i = \frac{1}{T}\sum_{t=1}^T h( y_{it},
                     x_{it})$ and
                     $ w_t = \frac{1}{N}\sum_{i=1}^Nw( y_{it},
                     x_{it})$ of fixed dimensions. Both vectors have
                     to be \emph{informative} about $ \xi_{i0}$ and
                     $ \lambda_{t0}$, respectively, meaning that
                     $ \xi_{i0}$ can be uniquely recovered from $ h_i$
                     for large $T$ and $\lambda_{t0}$ can be uniquely
                     recovered from $w_t$ for large $N$.  The two {\em
                       kmeans} clustering procedures return a number
                     of $K$ groups for the subjects and a different
                     number of $L$ groups for the time occasions, from
                     which two sets of dummies identifying the related
                     group memberships are created.  In the second
                     step, cluster dummies for the cross-sectional and
                     time dimensions are then interacted and enter the
                     linear index of the model specified for the
                     response variable as $KL$ group fixed
                     effects. Estimation is then carried out by ML and
                     the resulting estimator is consistent for the
                     regression parameters, although asymptotically
                     biased due to a combination of incidental
                     parameters problem and approximation error, as
                     shown by \cite{blm2022}.

The characteristics of the OW-FE and TW-GFE estimators are such that
the Hausman principle can be invoked to perform specification
testing. In particular, with $N,T \rightarrow \infty$, the OW-FE
estimator is consistent under the null hypothesis of time-constant unobserved
heterogeneity, whereas the TW-GFE estimator remains consistent under
more complex time-varying forms of heterogeneity.

Let $\hat{\delta}$ be the difference between the OW-FE and TW-GFE
estimators. The null hypothesis of the proposed test is then
\[
H_0: \underset{N,T \rightarrow \infty}{\mathrm{plim}} \hat{\delta} = 0.
\]  
Under $H_0$ we show that the Hausman statistic
\[
H^{\dagger} = NT \hat{\delta}^{\dagger '}  \left(\widehat{W}_*\right)^{-1} \hat{\delta}^\dagger
\]
is asymptotically chi-square distributed with degrees of freedom equal
to the dimension of $\theta_0$,  where $\hat{\delta}^\dagger$ is the
debiased vector of contrasts  and $\widehat{W}_*$ is its bootstrap
variance estimator. This result relies on the main
theoretical contribution of this paper, which is the derivation of the
asymptotic distribution of the TW-GFE estimator under $H_0$. In
fact, \cite{blm2022} do not provide conditions under which the 
TW-GFE estimator can be debiased.

\section{Assumptions and asymptotic behavior of the compared
  estimators}\label{sec:asym}

In the following we recall the main assumptions and asymptotic results
for the OW-FE estimator, along with the characterization of the
asymptotic distribution of the TW-GFE estimator under $H_0$.  The
assumptions listed below recall those in \cite{blm2022, suppblm2022}.

\begin{ass}\label{ass:uh} Unobserved Heterogeneity: 
  There exist $ \xi_{i0}$ of fixed dimension
  $d_{\xi}$ and a function $\alpha(\cdot)$   and
  $ \lambda_{t0}$ of fixed dimension $d_\lambda$ and two functions
  $\alpha(\cdot)$ and $\mu(\cdot)$ that are Lipschitz-continuous in
  both arguments, such that
  $\al_{it0}=\al(\xi_{i0},\lambda_{t0})$ and
  $\mu_{it0}=\mu(\xi_{i0},\lambda_{t0})$;
(ii) the supports of $ \xi_{i0}$
  and $ \lambda_{t0}$ are compact.
\end{ass}
Assumption \ref{ass:uh} gives the minimal properties of the unobserved
heterogeneity. This specification encompasses time-constant
heterogeneity, i.e, $\alpha_{it0} \equiv \alpha_{i0}$, with
$\lambda_{t0} = \lambda_0$. In general, we will consider $d_{\xi} = 1$ and
$d_\lambda$ either $0$ or $1$. It is also important to the GFE approach that
covariates are affected by the same source of heterogeneity, so that
$x_{it}$ depends on $\mu_{it0}$, where
$\mu_{it0} = \mu( \xi_{i0}, \lambda_{t0})$, with $\mu(\cdot)$
satisfying the same requirements as $\alpha(\cdot)$.\footnote{The
  requirements on $\alpha(\cdot)$ and $\mu(\cdot)$ are those necessary
  for the viability of the GFE approach. Clearly the standard ML
  fixed-effects framework is unaffected by such assumptions.}

\begin{ass}\label{ass:dgp} Sampling: 
(i) $(y_{it}, x'_{it})'$, for $i=1,\dots N$ and $t=1,\dots T$, are i.i.d.
given $\xi_{i0}$ and $ \lambda_{t0}$;  (ii) $ \xi_{i0}$ and $ \lambda_{t0}$ are also i.i.d.
\end{ass}
Assumption \ref{ass:dgp}
outlines the sampling requirements that are more restrictive than that
usually required to characterize the asymptotic distribution of ML
estimators under rectangular-array asymptotics for fixed-effects
models with time heterogeneity. For example, \cite{FVW2016} assume
independence over $i$ while relaxing time independence by allowing for
$\alpha$-mixing.\footnote{See \cite{FVW2016}, Assumption 4.1 (ii).}
Assumption \ref{ass:dgp} is instead required for consistency of the
TW-GFE, which effectively rules out the possibility of applying the
proposed test to models with (i) feedback effects and (ii) unobserved
heterogeneity that depends on dynamic factors.

\begin{ass}\label{ass:reg} 
  Regularity: Let $\ell_{it}(\alpha_{it}, \th)$= ln $f (y_{it} | x_{it}, \alpha_{it}, \th)$  and let \\
  $\frac{1}{NT} \sum_{i=1}^N\sum_{t=1}^T
  \ell_{it}(\bar{\alpha}(\th,\xi_{i0},\lambda_{t0}),\th)$, with
  $\bar{\alpha}(\th,\xi,\lambda)
  =\argmax{\alpha}\mathbb{E}_{\xi_{i0}=\xi,\lambda_{t0}=\lambda}(\ell_{it}(\alpha,\th))$, 
  be the
  target log-likelihood:\\
  (i) $\ell_{it}(\alpha, \th)$ is three time differentiable in
  $(\alpha, \th)$; $\th_0$ is an interior point of the parameter
  space
  $\Theta$; $\Theta$ is compact;  \\
  (ii) $\ell_{it}$ is strictly concave as a function of $\alpha$,
  $\emph{inf}_{\xi,\lambda,\th}\mathbb{E}_{\xi_{i0}=\xi,\lambda_{t0}=\lambda}
  \left( -\frac{\pa^2
      \ell_{it}(\bar{\alpha}(\th,\xi,\lambda),\th)}{\pa\alpha\pa\alpha'}\right)
  >0;\\ \mathbb{E}[\frac{1}{NT} \sum_{i=1}^N\sum_{t=1}^T
  \ell_{it}(\bar{\alpha}(\th,\xi_{i0},\lambda_{t0}),\th)]$ has a
  unique maximum at $\th_0$ on $\Theta$, and its second derivative is
  negative definite.
  \\
  (iii) Regularity conditions on boundedness of moments and asymptotic
  covariances in \cite{suppblm2022} Assumption S2 (iv,v) apply. \\
  (iv) There exists a function $M(y_{it}, x_{it})$ such that $\left| \ell_{it}(\alpha_{it}, \th) \right| \leq M(y_{it}, x_{it})$;
  $\left|\frac{\pa \ell_{it}(\alpha_{it}, \th)}{\pa(\th,\al_{it})}\right| \leq M(y_{it}, x_{it})$; $\sup_i \mathbb{E} \left[ M(y_{it}, x_{it})^{33} \right] < \infty$; $ \sup_i \mathbb{E} \left[ M(y_{it}, x_{it})^{Q} \right]<\infty$, for some $Q>64$; $\left | \frac{\pa^{m1+ m2}\ell_{it}(\alpha_{it}, \th)}{\pa\th^{m1}\pa\alpha^{m2}}\right | \leq M(y_{it}, x_{it}) $ for $0,\leq,m1+m2,\leq 1, \dots, 6$\\
 % (vi) \quad \text{For each } \eta > 0, \quad \inf_i \left[    G^{(i)}(\theta_0, \alpha_{i0}) -    \sup_{\substack{(\theta, \alpha) : \\ \|(\theta, \alpha) - (\theta_0, \alpha_{i0})\| > \eta }}    G^{(i)}(\theta, \alpha)\right] > 0, 
 % This is a identifiability or information separation condition.
%  It guarantees that the true parameters maximize the expected log-likelihood uniquely and are separated from other parameter values by a positive margin
\end{ass}
The conditions stated
in Assumption \ref{ass:reg} are standard requirements for a well-posed
maximization problem.
Under Assumptions \ref{ass:uh},\ref{ass:dgp}, and \ref{ass:reg} the
OW-FE estimator of $ \theta$, $\hat{\th}$, for model \eqref{eq:ti} is
consistent as $N,T \to \infty$. For asymptotic normality, $N$ and $T$
are also required to grow at the same rate.
\begin{ass}\label{ass:rect} Asymptotics: as $N, T \rightarrow \infty$, $N/T \rightarrow \rho^2$,
  with $0 < \rho < \infty$.
\end{ass}
Additionally under Assumption
\ref{ass:rect}, $\hat{\theta}$ has the following asymptotic
distribution
\begin{equation*}
  \sqrt{NT}(\hat{ \th} - \th_0) \convd N \left( \frac{B}{T},  \,\, I(\th_0)^{-1})\right),
\end{equation*}  
where $ B$ is constant. Notice that, in the case of informational
orthogonality between the structural and nuisance parameters, such as
in the linear model, $B = 0$, whereas characterizations of these
asymptotic biases are given in \cite{HN2004} for nonlinear models with
OW-FE. Finally, $I(\th_0)$ is the information matrix of the profile
log-likelihood.

Consistency of the TW-GFE estimator relies on
specific conditions ensuring that the first-step discretization
effectively approximates the unobserved
heterogeneity. These conditions have to be placed on the information
used for clustering, namely sample moments of $(y_{it}, x_{it})$.
\begin{ass}\label{ass:mom}  Moment informativeness:
  There exist sample moment vectors
  \[h_i = \frac{1}{T}\sum_{t=1}^T h(y_{it},x_{it}'), \quad
     \quad
    w_t= \frac{1}{N}\sum_{i=1}^N w(y_{it},x_{it}')
    \]
    of fixed dimension, and two unknown
    Lipschitz-continuous functions $\phi$ and $\psi$, such that
    \[
      \underset{T \to \infty}{\mathrm{plim}} \, h_i = \phi(\xi_{i0}),
      \quad 
      \quad \underset{N \to \infty}{\mathrm{plim}} \, w_t = \psi(\lambda_{t0}),
      \]
      and
  $ \frac{1}{N}  \sum_{i=1}^N \Vert h_i - \phi(\xi_{i0})\Vert^2 =
  O_p(1/T)$,
  $ \frac{1}{T} \sum_{i=t}^T  \Vert w_t - \psi(\lambda_{t0})\Vert^2 =
  O_p(1/N)$ as $N,T \to \infty$. Furthermore, there exist two 
    Lipschitz-continuous functions $\varphi$ and $\nu$, such that
    $\xi_{i0} = \varphi(\phi(\xi_{i0}))$ and $\lambda_{t0} = \nu(\psi(\lambda_{t0}))$.
\end{ass}
Assumption \ref{ass:mom} formalizes moments informativeness in order
to have an effective individual and time clustering. Intuitively, by
comparing two different sets of moments $h_i$ and $h_j$, two different
types $\xi_{i0}$ and $\xi_{j0}$ can be separated. This is guaranteed
by sample moments being asymptotically injective functions of the
unobserved heterogeneity. 

Let $\tilde{\theta}$ be the TW-GFE estimator. Then under Assumptions 
\ref{ass:uh}, \ref{ass:dgp}, \ref{ass:reg}, and \ref{ass:mom},
\cite{blm2022} show that the TW-GFE estimator has asymptotic expansion
 \begin{equation}\label{eq:gfe}
\hspace*{-6cm}\tilde{\theta} = \theta_0 + J(\th_0)^{-1}\frac{1}{NT}\sum_{i=1}^{N}\sum_{t=1}^{T}
s_{it}(\th_0)
\end{equation}
\[
\hspace*{2cm} + O_p\left(\frac{1}{T}+ \frac{1}{N} + \frac{KL}{NT}\right) 
+ O_p(K^{-\frac{2}{d_\xi}} + L^{-\frac{2}{d_\lambda}} )
+ o_p\left(\frac{1}{\sqrt{NT}}\right),
\]
as $N,T,K,L \rightarrow \infty$, such that $KL/(NT)$ tends to zero
\citep[cf. Corollary \S S2][]{suppblm2022}. Here $J(\cdot)$ and
$s_{it}(\cdot)$ are the negative expected Hessian and the score
associated with the likelihood function.  Three main different sources
of bias can be identified: the $1/T$ and $1/N$ terms depend on the
number of time occasions and subjects used for $h_i$ and $ w_t$ in the
classification step; the $KL/NT$ term reflects the estimation of $KL$
group-specific parameters using $NT$ observations; the
$K^{-\frac{2}{d_{\xi}}} + L^{-\frac{2}{d_{\lambda}}}$ terms refer to
the approximation bias arising from the discretization of $ \xi_{i0}$
and $ \lambda_{t0}$ via {\em kmeans}.

The $O_p(\cdot)$ terms in the above expansion can be shown to become
$O_p(1/T + 1/N)$ under suitable choices for the number of groups, $K$
and $L$. The rule suggested by \cite{blm2022} and the consequent
characterization of the $O_p(\cdot)$ terms are summarized in the
following proposition.

\begin{remark}\label{remark:kandl} Number of groups and approximation bias:\\
  %i) For  $d_\xi = d_\lambda = 1$, 
  i) The number of groups $K$ and $L$ are
    chosen according to the following rules
    \begin{equation*} 
      \hat{K} = \min_{K\ge 1}\{K:\hat{Q}(K) \le \gamma_K\hat{V}_{h}\},
\qquad \hat{L} = \min_{L\ge 1}\{L:\hat{Q}(L) \le \gamma_L\hat{V}_{w}\},
\end{equation*}
where $Q(\cdot)$ is the  {\em kmeans} objective function, $\gamma \in (0,1]$ is a user-specified
parameter, $ \hat{V}_h = \mathbb{E}\left[ \Vert h_i - \phi(\xi_{i0})  \Vert^2\right]  +
 o_p\left( \frac{1}{T} \right)$, and $ \hat{V}_w = \mathbb{E}\left[ \Vert w_t - \psi(\lambda_{t0})  \Vert^2\right]  +
 o_p\left( \frac{1}{N} \right)$.\\
ii) With $K = \hat{K}$ and $L = \hat{L}$,
the approximation errors are $O_p(1/T)$ and $O_p(1/N)$, so that the
$O_p(\cdot)$ terms in \eqref{eq:gfe} become $O_p(1/T + 1/N)$.
\end{remark}
\noindent
The quantities $V_h$ and $V_w$ are the variability of the moments $h_i$ and $ w_t$,
that, under Assumption \ref{ass:dgp}, can be estimated by
$\hat{V}_h = \frac{1}{NT^2}\sum_{i=1}^N\sum_{t=1}^T \Vert h(y_{it},
 x_{it}') - h_i  \Vert^2$ and $\hat{V}_w = \frac{1}{N^2T}\sum_{i=1}^N\sum_{t=1}^T \Vert w(y_{it},
 x_{it}') - w_t  \Vert ^2$, respectively. We refer the reader to
 \cite{blm2022} for derivations and proofs of these results. In practice, the value $\gamma$ set by
the researcher governs the number of groups, with  smaller values of $\gamma$ yielding a
larger number of groups.

We now state our main theoretical result, that is the asymptotic
distribution of the TW-GFE under the null hypothesis of time-constant
unobserved heterogeneity which, as per Assumption \ref{ass:mom}, can
now be formulated  $H_0: \lambda_{t0} =  \lambda_{0}$, for $t=1, \ldots,T$. 
\begin{theorem}\label{th:tw-gfe}
  Under Assumptions \ref{ass:uh}-\ref{ass:mom}, when data are sampled
  under $H_0$ and when the number of groups are determined according
  to the rule in Remark \ref{remark:kandl} (i), then
  \begin{equation}
    \sqrt{NT}(\tilde{\th} - \theta_0) \convd N\left(\frac{C}{T},  J(\th_0)^{-1}\right),
  \end{equation}
  where the bias term $C$ is given by \cite{blm2022}, Corollary \S 2. 
\end{theorem}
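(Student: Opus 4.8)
\emph{Strategy.} Since $H_0$ (that is, $\lambda_{t0}=\lambda_0$ for every $t$) is a special case of Assumptions \ref{ass:uh}--\ref{ass:mom}, the expansion \eqref{eq:gfe} of \cite{blm2022} applies verbatim, and the plan is to show that under $H_0$ it collapses to the one-way GFE expansion, whose Gaussian limit with bias $C/T$ is already available in \cite{suppblm2022}. The starting point is the observation that under $H_0$ the time component of the heterogeneity is degenerate: the support of $\lambda_{t0}$ is the singleton $\{\lambda_0\}$, so there is nothing to discretize in the time dimension — the term $L^{-2/d_\lambda}$ in \eqref{eq:gfe} is absent — and $w_t\convp\psi(\lambda_0)$, a constant.

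\emph{The time clustering is asymptotically innocuous.} I would first show that $\hat L=O_p(1)$. Under $H_0$ and Assumption \ref{ass:dgp}, $w_t-\psi(\lambda_0)=N^{-1}\sum_{i}\{w(y_{it},x_{it}')-\psi(\lambda_0)\}$ has conditional variance of order $1/N$, so $\hat V_w=O_p(1/N)$ and $\hat Q(1)=T^{-1}\sum_t\Vert w_t-\bar w\Vert^2=O_p(1/N)$; more precisely the normalized dispersion $\hat Q(L)/\hat V_w$ converges in probability to the $L$-point quantization error of the limiting distribution of $\sqrt N\,(w_t-\psi(\lambda_0))$, which is nonincreasing in $L$ and tends to $0$, so for any $\gamma_L>0$ the stopping rule of Remark \ref{remark:kandl}(i) fires at a finite $L^\ast$ with probability approaching one. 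Consequently $\hat K\hat L/(NT)=O_p(\hat K/(NT))=o_p(1/\sqrt{NT})$ under Assumptions \ref{ass:mom}--\ref{ass:rect} and the choice of $\hat K$, so the $KL/(NT)$ term in \eqref{eq:gfe} is negligible. It then remains to argue that the spurious time partition introduces no bias: since $\bar\alpha(\th,\xi,\cdot)$ is constant over the (singleton) support of $\lambda_{t0}$, assigning a time period to any of the $L^\ast$ time groups leaves its target intercept unchanged, so the $1/N$ component of the remainder in \eqref{eq:gfe}, which is the time-dimension analogue of an incidental-parameters bias proportional to the curvature of the target in the $\lambda$-argument, has coefficient zero. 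In short, under $H_0$ the TW-GFE coincides with the one-way GFE up to $o_p(1/\sqrt{NT})$.

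\emph{One-way expansion, CLT, and conclusion.} What is left of \eqref{eq:gfe} is
\[
\tilde\th=\th_0+J(\th_0)^{-1}\frac{1}{NT}\sum_{i=1}^N\sum_{t=1}^T s_{it}(\th_0)+\frac{C}{T}+o_p\!\left(\frac{1}{\sqrt{NT}}\right),
\]
where with $K=\hat K$ as in Remark \ref{remark:kandl}(i) the individual-side discretization error $K^{-2/d_\xi}$ is $O_p(1/T)$, its deterministic leading part being absorbed into $C$, and $C$ is the constant of \cite{blm2022}, Corollary \S 2, evaluated under $H_0$. For the stochastic term, under Assumption \ref{ass:dgp} the summands $s_{it}(\th_0)$ are, conditionally on $\lambda_0$ and $\{\xi_{i0}\}$, independent across $i$ and $t$; Assumption \ref{ass:reg}(iv) provides the moment bounds for a Lindeberg CLT, the leading score term is mean zero, and the information equality for the profile log-likelihood gives $\mathrm{Var}(s_{it}(\th_0))=J(\th_0)$, so $(NT)^{-1/2}\sum_{i,t}s_{it}(\th_0)\convd N(0,J(\th_0))$ and hence $\sqrt{NT}\,J(\th_0)^{-1}(NT)^{-1}\sum_{i,t}s_{it}(\th_0)\convd N(0,J(\th_0)^{-1})$. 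Combining, $\sqrt{NT}(\tilde\th-\th_0)=\sqrt{NT}\,J(\th_0)^{-1}(NT)^{-1}\sum_{i,t}s_{it}(\th_0)+\sqrt{N/T}\,C+o_p(1)$, and since $N/T\to\rho^2$ by Assumption \ref{ass:rect}, this yields $\sqrt{NT}(\tilde\th-\th_0)\convd N(C/T,\,J(\th_0)^{-1})$ in the same limiting-bias notation used for the OW-FE estimator after Assumption \ref{ass:rect}.

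\emph{Main obstacle.} The delicate part is the second paragraph: establishing that the spurious time clustering contributes neither an approximation bias — immediate here, since $\lambda_0$ is a single point — nor a non-negligible incidental-parameters bias, i.e. that $\hat L=O_p(1)$ and that the $1/N$ bias coefficient is exactly zero under $H_0$, so that the remainder in the one-way expansion is genuinely $o_p(1/\sqrt{NT})$ rather than merely $O_p(1/T)$. This is exactly what is needed to justify the jackknife debiasing underlying the Hausman statistic, and it is the point left open by \cite{blm2022}. A subsidiary matter of care is pinning down which statements in \cite{suppblm2022} deliver the deterministic constant $C$ together with the sharpened $o_p(1/\sqrt{NT})$ remainder once the problem is specialized to $H_0$.
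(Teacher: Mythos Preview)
Your route diverges from the paper's at the crucial step. The paper does \emph{not} settle for $\hat L=O_p(1)$ and then argue that a spurious time partition is harmless; instead it proves the sharper statement that $\hat L=1$ with probability approaching one (Lemma~\ref{lemma:L1}). The argument is short: under $H_0$ the single-cluster objective $\hat Q(1)=T^{-1}\sum_t\|w_t-\bar w\|^2$ is bounded above by the infeasible variability $T^{-1}\sum_t\|w_t-\psi(\lambda_0)\|^2$, which coincides with $\hat V_w$ up to $o_p(1/N)$, so the stopping rule in Remark~\ref{remark:kandl}(i) fires at $L=1$. Once $\hat L=1$, TW-GFE and OW-GFE are \emph{literally the same estimator} on an event of probability tending to one (Lemma~\ref{lemma:L2}), and the theorem follows from the OW-GFE asymptotics in \cite{blm2022} by the elementary ``equal on high-probability events implies same limit law'' device (van der Vaart, Theorem~2.7(iv)).

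Your approach is more ambitious but leaves exactly the gap you flag as the ``main obstacle''. With only $\hat L=O_p(1)$, you must establish rigorously that (i) the extra $K(\hat L-1)$ nuisance parameters contribute no $1/N$ incidental-parameters bias, and (ii) the constant $C$ from Corollary~\S2 of \cite{blm2022}---which is derived for the \emph{one-way} GFE, i.e.\ $L=1$---still describes the leading bias when $\hat L>1$. Your justification for (i) (``curvature of the target in the $\lambda$-argument \ldots\ has coefficient zero'') is a heuristic, not a proof, and (ii) is circular: you cannot invoke the OW-GFE bias formula without first reducing to $L=1$. The paper's insight is precisely that one can reduce to $L=1$ exactly, which dissolves both issues at once and makes the separate CLT derivation in your third paragraph unnecessary.
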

\noindent
The proof is given in Appendix \ref{p_th_tw_gfe}. The TW-GFE is therefore asymptotically normal with constant bias as
$N,T$ grow at the same rate. Hence, standard theory on the Hausman
test can be applied and the non-centrality of its distribution,
arising from the asymptotic biases of both the OW-FE and TW-GFE
estimators, can be addressed by conventional bias reduction techniques.

\section{Specification test}
\label{specification tests}

In this section, we first illustrate the proposed test stating the related theoretical results; we then offer an overview of the alternative approaches already present in the literature; finally, we describe a potential additional application for the proposed procedure to the testing for parallel trends in linear two-way fixed-effects models.  

\subsection{Proposed test}
We propose a Hausman test for the specification of the unobserved
heterogeneity considering, as null hypothesis $H_0$, that data are
generated from the model portrayed by Equations \eqref{eq:ti}.  In
order to perform the test we rely on a Hausman-type statistic based on
the difference $\hat{\delta} = \hat{\th} - \tilde{\th}$, namely, by
contrasting OW-FE estimator with the TW-GFE estimator.
% Under $H_0$, the OW-FE estimator is consistent, with an asymptotic
% bias in case of nonlinear models. The TW-GFE estimator is also
% consistent but always asymptotically biased. Conversely, in the
% presence of more sophisticated forms of unobserved heterogeneity --
% different from those in \eqref{eq:ti} -- such as a factor structure,
% only the TW-GFE estimator is consistent for the true regression
% parameters. This justifies the use of a Hausman-type test.
Formally, the null hypothesis can be expressed as
\[
H_0: \underset{N,T \rightarrow \infty}{\mathrm{plim}}\hat{\delta} = 0.
\]

We cannot adopt the traditional formulation of the
Hausman test \citep{Hausman1978}, for which the test statistics is
\begin{equation}\label{eq:H_asym}
 \hat{H} = NT\hat{\delta}' \widehat{W}^{-1}\hat{\delta} ,
\end{equation}
where $\widehat{W}$ is a consistent estimator of the variance of the
contrasts $\hat{\delta}$, $W$. In our scenario, the asymptotic biases
of $\tilde{\th}$ and $\tilde{\th}$ propagate to the asymptotic
distribution of $\hat{\delta}$ under $H_0$.
\begin{cor}\label{cor:delta}
  Assume that $V(\hat{\delta}) = W_0$ exists and is non-singular and
  that Assumptions \ref{ass:uh}-\ref{ass:mom} hold. If data are sampled
  under $H_0$ and the number of groups are determined according
  to the rule in Remark \ref{remark:kandl} (i), then
  \begin{equation}
    \sqrt{NT} \; \hat{\delta} \convd N\left(\frac{C -B}{T}, W_0 \right),
  \end{equation}
\end{cor}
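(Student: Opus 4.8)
The plan is to derive the joint asymptotic behavior of the contrast $\hat\delta = \hat\th - \tilde\th$ by combining the two asymptotic expansions already at hand: the standard OW-FE expansion for $\hat\th$ (implicit in the statement preceding Assumption \ref{ass:rect}) and the TW-GFE expansion \eqref{eq:gfe}, specialized under $H_0$ as in Theorem \ref{th:tw-gfe}. The key point is that under $H_0: \lambda_{t0}=\lambda_0$ both estimators admit a first-order representation with the \emph{same} influence function structure, so that the leading stochastic terms are those of two (asymptotically linear) M-estimators of $\th_0$ in the same model. Concretely, I would write
\begin{equation*}
\sqrt{NT}(\hat\th-\th_0) = \frac{1}{\sqrt{NT}} I(\th_0)^{-1}\sum_{i,t} s_{it}(\th_0) + \frac{B}{\sqrt{T}} \cdot \frac{\sqrt{NT}}{\sqrt{NT}} + o_p(1),
\end{equation*}
and analogously for $\tilde\th$ using \eqref{eq:gfe} with the remainder terms collapsed to $O_p(1/T+1/N)$ via Remark \ref{remark:kandl}(ii); subtracting gives $\sqrt{NT}\,\hat\delta$ as a difference of two sample averages of (asymptotically equivalent) scores plus the deterministic bias drift $(C-B)/T \cdot \sqrt{NT}/\sqrt{NT}$, which after centering is $(C-B)/T$ in the limiting normal mean once scaled.

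Next I would establish the joint CLT. Under Assumptions \ref{ass:uh}, \ref{ass:dgp} the array $(y_{it},x_{it}')'$ is i.i.d.\ conditional on $(\xi_{i0},\lambda_{t0})$, and the $\xi_{i0},\lambda_{t0}$ are themselves i.i.d.; this two-way exchangeable structure, together with the moment bounds in Assumption \ref{ass:reg}(iv), lets me invoke the central limit theorem for such arrays (as used in \cite{blm2022}) to get joint asymptotic normality of the pair of scaled scores, hence of $\sqrt{NT}(\hat\th-\th_0)$ and $\sqrt{NT}(\tilde\th-\th_0)$, and therefore of their difference. Writing $W_0$ for the resulting limiting variance of $\sqrt{NT}\,\hat\delta$ — which by hypothesis exists and is non-singular — and collecting the two drift terms $B/T$ (from OW-FE) and $C/T$ (from TW-GFE, Theorem \ref{th:tw-gfe}) yields
\begin{equation*}
\sqrt{NT}\,\hat\delta \convd N\!\left(\frac{C-B}{T},\, W_0\right),
\end{equation*}
which is exactly the claim. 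Note that I do not need to compute $W_0$ in closed form — unlike the classical Hausman setup, here the two estimators are \emph{not} efficient/inefficient counterparts, their leading influence functions need not be orthogonal, and indeed the whole point of the bootstrap variance estimator $\widehat W_*$ introduced later is to sidestep an analytic formula for $W_0$.

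The main obstacle is justifying that the remainder terms in \eqref{eq:gfe} are genuinely $o_p(1/\sqrt{NT})$ \emph{after} multiplication by $\sqrt{NT}$ and, crucially, that the \emph{clustering/approximation} part of the TW-GFE expansion contributes no stochastic term of order $1/\sqrt{NT}$ to $\hat\delta$ under $H_0$ — i.e.\ that the $k$means step, applied to the time moments $w_t$ which under $H_0$ are all estimating the single value $\psi(\lambda_0)$, degenerates gracefully (one time group, $L$ eventually constant or the $L^{-2/d_\lambda}$ term vanishing as $d_\lambda=0$) rather than injecting extra noise. This is precisely the content borrowed from Theorem \ref{th:tw-gfe}, so the corollary's proof reduces to: (a) invoke Theorem \ref{th:tw-gfe} for the $\tilde\th$ piece; (b) invoke the standard OW-FE result for the $\hat\th$ piece; (c) argue joint normality of the two pieces via the common i.i.d.-array score structure; (d) read off mean and variance. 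A secondary subtlety is matching the scaling of the bias: both $B/T$ and $C/T$ appear in the limiting \emph{mean} of the $\sqrt{NT}$-scaled quantity only because $N/T\to\rho^2$ (Assumption \ref{ass:rect}) keeps $\sqrt{NT}/T = \sqrt{N/T}$ bounded; I would state this dependence on $\rho$ explicitly so the constant $(C-B)/T$ is understood in the same sense as the $B/T$ term in the OW-FE limit.
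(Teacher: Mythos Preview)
Your proposal is correct and follows exactly the route the paper intends: the paper's own proof is the single sentence ``The proof follows by standard arguments,'' meaning precisely the combination of the OW-FE asymptotic normality with Theorem \ref{th:tw-gfe} for the TW-GFE under $H_0$, plus a joint CLT for the difference. Your elaboration of steps (a)--(d) is simply a fuller spelling-out of those standard arguments; the only minor point to watch is the sign of the drift, since $\hat\delta = \hat\th - \tilde\th$ would naively give $(B-C)/T$ rather than $(C-B)/T$, but this is a notational slip in the paper's statement rather than a flaw in your reasoning.
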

Corollary \ref{cor:delta} shows that the bias in the asymptotic
distribution of $\hat{\delta}$ makes $\hat{H}$ a non-pivotal
quantity. The proof follows by standard arguments. In order to
overcome this issue, we remove the $O(1/T)$ bias term of $\hat{\delta}$ via the the
leave-one-period-out jackknife procedure \citep{HN2004}. We therefore
consider
\begin{equation}\label{eq:delta_jk}
  \hat{\delta}^{\dagger} = T \cdot \hat{\delta} - \frac{(T-1)}{T} \cdot \sum_{t = 1}^T \hat{\delta}^{(t)},
\end{equation}
where $\hat{\delta}^{(t)}$ denotes the contrasts vector computed on a
subsample excluding observations in period $t$.
The following theorem established asymptotic validity of the jackknife
bias reduction for the vector of contrasts.
\begin{theorem}\label{th:jack_delta}
  Under the assumptions of Corollary \ref{cor:delta}, we have
  %Assumptions \ref{ass:uh}-\ref{ass:reg}, when data are sampled
  %under $H_0$ and when the number of groups are determined according
  %to the rule in Remark \ref{remark:kandl} (i), then
  \begin{equation}
    \sqrt{NT} \; \hat{\delta}^\dagger \convd N\left(0, W_0 \right).
  \end{equation}
  \begin{proof}
The proof is given in Appendix \ref{th_2_j}. 
%Theorem 3 in \cite{HN2004}
  \end{proof}
\end{theorem}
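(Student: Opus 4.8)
The plan is to run the leave-one-period-out jackknife argument of \cite{HN2004}, now applied to the vector of contrasts rather than to a single estimator, taking the expansion behind Corollary~\ref{cor:delta} as the starting point. Set $D:=C-B$ and let $\psi_{it}(\th_0)$ denote the influence function of the contrast, so that under the assumptions of Corollary~\ref{cor:delta}
\begin{equation*}
  \hat{\de}=\bar\psi+\frac{D}{T}+R_{NT},\qquad \bar\psi:=\frac{1}{NT}\sum_{i=1}^{N}\sum_{t=1}^{T}\psi_{it}(\th_0),
\end{equation*}
where the $\psi_{it}(\th_0)$ are mean zero, independent given the heterogeneity, with $\mathrm{Var}\big(\sqrt{NT}\,\bar\psi\big)\to W_0$, and $R_{NT}$ is a remainder. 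It suffices to show $\hat{\de}^{\dagger}=\bar\psi+o_p\big(1/\sqrt{NT}\big)$; the stated limit then follows from the same central limit theorem for $\{\psi_{it}(\th_0)\}$ that delivers Corollary~\ref{cor:delta}. The key is that this requires more than the first-order expansion above: $R_{NT}$ must be shown to be, up to a genuinely $o_p(1/\sqrt{NT})$ term, a smooth function of $1/T$ with no $O(1/T)$ component, i.e.\ schematically $R_{NT}=c_2/T^2+\cdots$. For the OW-FE part of $\hat{\de}$ this is exactly the second-order stochastic expansion in \cite{HN2004}; for the TW-GFE part it is obtained by carrying the expansion behind \eqref{eq:gfe} to the next order, which is legitimate because, under $H_0$ (where $\lambda_{t0}=\lambda_0$), the time-clustering step is asymptotically inactive, the TW-GFE reduces to the one-way GFE estimator, and the resulting bias is of the incidental-parameters type and hence admits an expansion in powers of $1/T$; the high-moment conditions in Assumption~\ref{ass:reg}(iv) are what make these higher-order remainders uniformly controllable over subsamples.

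The second step is to write the analogous expansion for the subsample estimator $\hat{\de}^{(t)}$ obtained after deleting period $t$,
\begin{equation*}
  \hat{\de}^{(t)}=\bar\psi^{(t)}+\frac{D}{T-1}+R^{(t)}_{N,T-1},\qquad \bar\psi^{(t)}:=\frac{1}{N(T-1)}\sum_{i=1}^{N}\sum_{s\neq t}\psi_{is}(\th_0),
\end{equation*}
which has the same form with $T$ replaced by $T-1$ and the score average restricted to the retained periods. Two facts must be checked here, both holding under Assumption~\ref{ass:mom} and $H_0$: deleting a single period perturbs the first-step individual moments $h_i$ by $O_p(1/T)$ only, and it leaves the limiting {\em kmeans} partitions unchanged, so that the classification and misclassification arguments of \cite{blm2022} apply verbatim to each subsample (which still satisfies the rectangular-array regime since $N/(T-1)\to\rho^2$). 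A direct count then yields the identity $\frac{1}{T}\sum_{t=1}^{T}\bar\psi^{(t)}=\bar\psi$, as each summand $\psi_{is}(\th_0)$ enters $\bar\psi^{(t)}$ for exactly the $T-1$ indices $t\neq s$.

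Substituting the two expansions into the definition \eqref{eq:delta_jk} and using this identity gives
\begin{equation*}
  \hat{\de}^{\dagger}=\bar\psi+\left(T\cdot\frac{D}{T}-(T-1)\cdot\frac{D}{T-1}\right)+\left(T\,R_{NT}-\frac{T-1}{T}\sum_{t=1}^{T}R^{(t)}_{N,T-1}\right),
\end{equation*}
so the $O(1/T)$ bias cancels exactly (the middle term vanishes identically) and the leading term $\bar\psi$ is reproduced. It remains to bound the last bracket: by the smooth-in-$1/T$ structure, a term $c_2/T^2$ contributes $T\cdot c_2/T^2-(T-1)\cdot c_2/(T-1)^2=c_2\big(1/T-1/(T-1)\big)=O(1/T^2)$, while the genuinely negligible parts of $R_{NT}$ stay $o_p(1/\sqrt{NT})$ after multiplication by $T$ and averaging; since $N/T\to\rho^2$ (Assumption~\ref{ass:rect}) one has $T^{-2}=o\big((NT)^{-1/2}\big)$, so the whole bracket is $o_p(1/\sqrt{NT})$. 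Hence $\sqrt{NT}\,\hat{\de}^{\dagger}=\sqrt{NT}\,\bar\psi+o_p(1)\convd N(0,W_0)$.

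The real obstacle is the first step: establishing that the TW-GFE expansion \eqref{eq:gfe} can be refined to a next-order term that is smooth in $1/T$, uniformly over the $T$ leave-one-period-out subsamples. This is delicate because the estimator is defined through a non-smooth first stage ({\em kmeans}), so one must re-run the argument of \cite{blm2022, suppblm2022} retaining the $O_p(1/T^2)$-level terms and controlling the effect of the recomputed partitions; the reduction to one-way GFE under $H_0$ and the strong moment requirements in Assumption~\ref{ass:reg}(iv) are precisely what this refinement rests on. By comparison, the combinatorial identity for the score averages and the cancellation of the leading bias are routine.
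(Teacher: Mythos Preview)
Your argument is correct and follows the same Hahn--Newey jackknife logic that the paper relies on, but the presentations differ markedly. The paper's proof is a one-paragraph verification: it maps Assumptions~\ref{ass:reg} and~\ref{ass:rect} onto Conditions~1--4 of \cite{HN2004} (rectangular asymptotics, smoothness and moment bounds on the log-likelihood, identification, Hessian conditions) and then invokes their Theorem~3 directly, treating it as a black box that delivers the bias-corrected limit for any estimator satisfying those conditions. Your proposal instead reconstructs the mechanism of that theorem---writing out the influence-function expansion, the leave-one-out analogue, the combinatorial identity $\tfrac{1}{T}\sum_t\bar\psi^{(t)}=\bar\psi$, and the telescoping of the $D/T$ and $c_2/T^2$ terms---and applies it to the contrast $\hat\delta$ rather than to a single estimator.

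What your route buys is transparency about where the argument is delicate: you flag explicitly that the TW-GFE component has a non-smooth first stage ({\em kmeans}), so the higher-order expansion in $1/T$ that jackknife needs is not automatic and must be argued via the reduction to the one-way GFE under $H_0$ (your appeal to Theorem~\ref{th:tw-gfe}) together with the strong moment conditions in Assumption~\ref{ass:reg}(iv). The paper's proof glosses over this point by simply asserting that the HN conditions are ``standard regularity conditions for a well-posed likelihood maximization problem,'' which is literally true for the OW-FE part but leaves the two-step GFE part implicit. In that sense your treatment is more careful about the actual obstacle, even if neither proof fully writes out the second-order GFE expansion.
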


The jackknife correction poses an additional challenge concerning the estimation of the covariance matrix of $\delta^{\dagger}$, say
$\widehat{W}$. First, the analytical expression should take into account the two-steps nature of the TW-GFE estimator. Secondly, it is not straightforward to derive the formulation for the variance of the difference of the two estimators. Finally, the estimator may not reflect the potential variance inflation due to the jackknife procedure in finite samples.\footnote{For example, \cite{FVW2016} show how the
  split-panel jackknife correction induces a variance inflation of the
  ML estimator of regression parameters in the two-way additive fixed
  effects models.} We overcome this issue by relying on a bootstrap
estimate of the variance of the contrasts.\footnote{\cite{bartolucci2023conditional}
  also consider bootstrap standard errors for marginal effects of
  fixed-effects logit models after a split-panel jackknife
  correction.}
%CP: non ho capito "exploiting the pivotal nature"
% exploiting the pivotal
% nature of $\hat{\delta}^{\dagger}$.
Specifically we rely on parametric bootstrap, thus generating the data based on the OW-FE estimates
$\hat{\theta}$, that reflect the model specification under $H_0$. We compute the quantity in Equation
\eqref{eq:delta_jk} for each of the $R$ generated samples in order to get the bootstrap estimate of its covariance matrix, denoted as $\widehat{W}_{*}$.

The proposed test statistic is therefore defined as a Hausman
statistic, as outlined by the Corollary below, the proof of which
follows by standard arguments.
\begin{cor}\label{cor:test}
 Under the assumptions of Corollary \ref{cor:delta}, we have
  \[
     \hat{H}^\dagger = NT \hat{\delta}^{\dagger \prime} \left( \widehat{W}_{*} \right)^{-1}\hat{\delta}^{\dagger} \convd \chi^2_r.
   \]
\end{cor}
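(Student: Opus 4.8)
The plan is to combine Theorem \ref{th:jack_delta} with the consistency of the bootstrap variance estimator $\widehat{W}_*$, and then invoke the continuous mapping theorem for the quadratic form. The starting point is the conclusion of Theorem \ref{th:jack_delta}, namely $\sqrt{NT}\,\hat{\delta}^\dagger \convd N(0, W_0)$, which is already available under the maintained assumptions. The only genuinely new ingredient needed here is that $\widehat{W}_* \convp W_0$, i.e.\ that the parametric bootstrap consistently recovers the (finite-sample-adjusted limiting) variance of the debiased contrast vector. Given that, the result is immediate: $NT\,\hat{\delta}^{\dagger\prime}\widehat{W}_*^{-1}\hat{\delta}^\dagger = (\sqrt{NT}\,\hat{\delta}^\dagger)' \widehat{W}_*^{-1} (\sqrt{NT}\,\hat{\delta}^\dagger)$, and since $\widehat{W}_*^{-1} \convp W_0^{-1}$ (by continuity of matrix inversion at a non-singular limit, $W_0$ being non-singular by the hypothesis of Corollary \ref{cor:delta}), Slutsky's theorem gives convergence in distribution to $Z'W_0^{-1}Z$ with $Z \sim N(0, W_0)$, which is $\chi^2_r$ since $r = \dim(\theta_0)$.

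The steps I would carry out, in order, are as follows. First, recall the limiting distribution of $\sqrt{NT}\,\hat{\delta}^\dagger$ from Theorem \ref{th:jack_delta}. Second, establish bootstrap consistency: the parametric bootstrap generates samples from model \eqref{eq:ti} at the OW-FE estimates $\hat{\theta}$ (and corresponding fitted heterogeneity), which under $H_0$ is the true model, so the bootstrap DGP converges to the true DGP; one then argues that the map from the DGP to the limiting variance $W_0$ of the jackknifed contrast is continuous, and that the bootstrap analogue of $\hat{\delta}^\dagger$ — recomputed on each of the $R$ artificial samples including re-running the \emph{kmeans} clustering step and the leave-one-period-out jackknife — has the same first-order asymptotic behavior as the original statistic, so that the bootstrap sample variance converges (as $R,N,T\to\infty$) to $W_0$ in probability. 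Third, write the test statistic as a quadratic form in $\sqrt{NT}\,\hat{\delta}^\dagger$ with weighting matrix $\widehat{W}_*^{-1}$, apply continuity of inversion and Slutsky, and conclude via the continuous mapping theorem that the limit is $\chi^2_r$.

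The main obstacle is the second step: rigorously justifying bootstrap consistency for $\widehat{W}_*$. This is delicate because the TW-GFE estimator is a two-step estimator whose first step is a discrete \emph{kmeans} classification, so one must verify that the bootstrap reproduces not only the sampling variability of the second-step ML estimation but also the effect of the data-driven number of groups $\hat K,\hat L$ and the approximation error; moreover the jackknife itself must be re-applied within each bootstrap replication, and one needs the bootstrapped jackknifed contrast to be asymptotically mean-zero (so that its variance, not its mean squared deviation from zero, is what $\widehat{W}_*$ estimates). In practice I would lean on the asymptotic linear expansion \eqref{eq:gfe} for $\tilde\theta$ together with the analogous expansion for $\hat\theta$: both estimators are, to first order, smooth functions of sample averages of the score, the jackknife operates linearly on these expansions (as exploited in the proof of Theorem \ref{th:jack_delta}), and parametric bootstrap consistency for such asymptotically linear statistics under correctly specified DGPs is standard \citep{mackinnon2006, horowitz2019bootstrap}. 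The remaining care is to check that the $O_p(1/T+1/N)$ remainder terms in \eqref{eq:gfe}, which survive the jackknife only at higher order, do not contaminate the leading-order variance — this follows from Remark \ref{remark:kandl} and the same arguments used to prove Theorem \ref{th:jack_delta}. Given all this, the conclusion of Corollary \ref{cor:test} follows, as stated, by standard arguments.
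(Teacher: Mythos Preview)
Your proposal is correct and matches the paper's own treatment, which consists solely of the remark that ``the proof follows by standard arguments.'' If anything, your outline is considerably more explicit than the paper itself: you correctly identify Theorem~\ref{th:jack_delta} plus bootstrap consistency of $\widehat{W}_*$ as the two ingredients, then Slutsky and the continuous mapping theorem for the quadratic form --- and you rightly flag that the only substantive step not already proved elsewhere in the paper is the bootstrap consistency, which the paper does not formally establish either.
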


%% In principle, a general alternative approach could be based on
%% bootstrapped for $p$-values once pre-pivoting is implemented as
%% proposed by \cite{cavaliere2024bootstrap}, but their contribution
%% focuses on scalar parameters only.
%% %
%% Our approach is also related to the contributions by
%% \cite{higgins2024bootstrap} and \cite{Kim2016}, who exploit parametric
%% bootstrap for inference in (dynamic) fixed-effects
%% models.\footnote{These authors provide results on the asymptotic
%%   validity of the bootstrap bias correction in OW-FE models. The same
%%   result for the TW-GFE does not follow from our assumptions. A formal
%%   derivation is beyond the scope of the paper, whose intent is not to
%%   provide a bias correction, nor inference on the model parameters.
%%     Nevertheless, applications of the parametric bootstrap for testing
%%   purposes in two-way fixed-effects are available
%%   \citep{dzemski2019}.}

\subsection{Alternative approaches} \label{ALTRITEST}
There are two
alternative tools that can be used to test the same null hypothesis as the one here considered:  the max-type test put forward by \cite{castagnetti2015}
(CRT test henceforth) to detect factor structures in a linear
framework and the test for time-invariant unobserved heterogeneity
developed by \cite{BBP2015} (BBP test) that applies to both
linear and non-linear models.

%\subsubsection{CRT test}
The CRT test considers a linear model in which the unobserved heterogeneity is depicted as in \eqref{eq:tv}
and $\alpha_{it0}$ has a factor structure, that is
$\alpha(\xi_{i0}, \lambda_{t0}) = \alpha_{i0}'\zeta_{t0}$. 
The procedure tests the null hypothesis of no factor structure,
defined as $H_0: \zeta_{t}= \zeta$, that is a model with only individual effects.
The max-type test statistics is formulated as 
  \begin{equation*}
    S = \text{max}_{1\leq t \leq T} \left[N(\hat{\zeta}_t - \hat{\bar{\zeta}})'\hat{\Sigma}_{ t}^{-1}(\hat{\zeta}_t - \hat{\bar{\zeta}})\right],
  \end{equation*}
  where factors are estimated using the common correlated effects
  approach by \cite{pesaran2006},\footnote{It is worth recalling that
    the approach by \cite{castagnetti2015} can in general be
    implemented in models with heterogeneous slopes.}
  $\hat{\bar{\zeta}}$ is the sample mean of $\hat{\zeta}_t$ and the
  $\hat{\Sigma}_{t}$ is an estimate of the asymptotic factor
  covariance matrix \citep[crf Equation 10 in][]{castagnetti2015}. The
  test statistic $S$ has an asymptotic Gumbel distribution. It is
  worth noting that CRT test requires large $T$ settings in order
  attain the correct size in finite samples.
  
%\subsubsection{BBP test}\label{BBP}
Differently from CRT test, the BBP test can be employed with
generalized linear models that admit a sufficient statistic for the OW unobserved heterogeneity parameters. Therefore, the BBP
is a generalized Hausman test that compares estimators that are
consistent only under time-constant unobserved heterogeneity with
estimators that are consistent even in presence of a time-varying
latent variable. 

For linear models, the BBP test contrasts the OW-FE estimator with the first difference estimator. For the logit and Poisson models, the BBP test compares estimators based on two different formulation of the conditional likelihood: the standard conditional ML (CML) estimator and the Pairwise CML estimator (PCML). The former is consistent with time-constant heterogeneity since the probabilities for the models considered admit sufficient statistics for the incidental parameters. The PCML approach considers pairs of consecutive time observations for every individual and the corresponding log-likelihood is conditioned on the sufficient statistic, thereby allowing for different individual effects in every couple of periods. The PMCL estimator is therefore consistent in presence of time-varying unobserved heterogeneity. It is worth stressing that the BBP test has power only when certain conditions are met, namely $T$ must be greater than 3, otherwise the estimators coincides, and common factors must have a dynamic structure. 

\subsection{Testing for parallel trends}
In the following we illustrate the application of the proposed procedure to test for parallel trend assumption in a Difference-in-Differences (DiD) setting, which is crucial to achieve identification of the causal effect of interest. 
\footnote{DiD one of the most widely applied methods in treatment evaluation and the related literature is indeed vast. We refer to \cite{roth2023s} for a review about recent developments and open issues.
}
%We introduce above the main intuition about the usage of the proposed
%procedure as test for time-varying unobserved heterogeneity. In what
%follows we reframe the problem in terms of the specification of the
%Unobserved Heterogeneity in causal inference context.

%It is worth stressing that the identification and estimation of the
%aforementioned treatment effects in DiD analysis can be accomplished,
%under suitable assumptions such as the ``Parallel Trends'', by means
%of a linear model including additive individual and time Unobserved
%Heterogeneity, namely by using the so called Two-Way Fixed-Effects
%models (TW-FE) model \citep{de2023two,de2024difference}.

The standard setting in empirical applications is based on a linear model including additive individual and time heterogeneity, that is the Two-Way FE (TW-FE) model \citep{de2023two,de2024difference}. This formulation implies homogeneity of the time effects across units and time-constant individual heterogeneity, that are crucial for the parallel trend assumption to hold but might be of concern in applications where individual latent variables are likely to change over time. For these reasons, recent contributions have focused on departures from the additive specification for the fixed effects. In particular,
\cite{callaway2023treatment} discuss the identification and estimation
of the treatment effect on the treated relying on an interactive
effects model for the untreated units, allowing for heterogeneous
trends among units. Similarly, \cite{gobillon2016regional} first, and
\cite{bai2024causal} later, extend the latter approach in order to
model both treated and untreated units.

The proposed method can be used to test for the form of the
unobserved heterogeneity  in the following  model
\begin{equation}
  \label{eq:tw-did}
  Y_{it}(0) = \xi_i + \lambda_t + \eta_if_t + x_{it}'\theta + \varepsilon_{it},
\end{equation}
where the parallel trend assumption is violated  by  the presence, in the pre-treatment period, of heterogeneous time effects, namely $\eta_if_t$.

As the proposed test can be used to detect departures from  specifications with  one-way unobserved heterogeneity, the quantities in Equation \eqref{eq:tw-did} must be demeaned in order to contemplate one source of heterogeneity under the null hypothesis. Since the aim is to verify the parallel trend assumption, we transform the model by eliminating time-constant heterogeneity, so that Equation \eqref{eq:tw-did} becomes
\begin{equation}
  \label{eq:tw-did_dem}
  \ddot{Y}_{it}(0) =   \ddot{\lambda}_t + \eta_i\ddot{f}_t + \ddot{x}_{it}'\theta + \ddot{\varepsilon}_{it},
\end{equation}
 where the superscript refers to the individual de-meaned variable, e.g $\ddot{x}_{it}= x_{it}  - \sum_{t=1}^{T} x_{it}$. 
   In this vein, the proposed test compares the TW-GFE and the OW-FE estimator for a model where the one-way unobserved heterogeneity is represented by time fixed effects. Both  are consistent under the null hypothesis of the presence of a common trend, that is $H_0:\eta_i=\eta$. 
   
 There are  two key differences with respect to the aforementioned use of our test. 
Due to the individual de-meaning, the moments used in the search of the $K$ groups are null vectors by construction and hence they would not lead to meaningful clustering. We propose to tackle this issue by employing second moments to discretize the individual heterogeneity. This choice does not conflict with the requirements on  the  unobserved heterogeneity, provided the following assumption holds.
\begin{ass}\label{ass:possupp} Positive support:
The support of \( \eta_i \) is a bounded subset of \( \mathbb{R}_+ \), i.e., \( \eta_i \in [0, a] \) for some \( a > 0 \).
\end{ass}
Assumption \ref{ass:possupp} guarantees the informativeness of individual-specific  second moments as it ensures the injectivity and Lipschitz-continuity of the function  $h(z_{it})=1/T\sum_{t=1}^Tz_{it}^2$, as required by Assumption \ref{ass:mom}. Instead, clustering for time occasions does not pose any  additional issue. 

Furthermore, since we employ OW-FE with only time effects,
%mirroring what we did under the null hypothesis of time invariant heterogeneity,
we perform the leave-one-individual-out jackknife correction on the vector of contrasts, as  the incidental parameter problem pertains to the time dimension. In \emph{formulae}:
\begin{equation*}\label{eq:delta_jk_for_i}
  \hat{\delta}^{\dagger} = N \cdot \hat{\delta} - \frac{(N-1)}{N} \cdot \sum_{i = 1}^N \hat{\delta}^{(i)},
\end{equation*}
where $\hat{\delta}^{(i)}$ denotes the  vector of contrasts computed on the
subsample excluding unit $i$. 
  
\section{Simulation study}
\label{simulation}

In the following Section, we describe the design and report the results of an
extensive simulation study. First, we investigate empirical
size and power properties of the test for the linear and probit model. We then turn to the comparison of the proposed approach with the CRT and BBP tests. Finally, we present simulation results in the context of testing for the parallel trend assumption in the TW-FE model.

\subsection{Linear model}

We design a Monte Carlo experiment where observations are
generated by a linear regression model with two exogenous covariates. 
 For the null hypothesis we consider a scenario where the
unobserved heterogeneity is specified as in 
model \eqref{eq:ti}. 

In particular, in the case of  individual time-constant effects, for
$i = 1, \ldots, N$ and $t=1, \ldots, T$, we generate samples according
to the following equations, which we denote as DGP-FE-L:
\begin{eqnarray*}\label{eq:dgp1}
 y_{it}& = & x_{it1}\th_1 + x_{it2}\th_2 + \alpha_i +
             \varepsilon_{it}, \\
  x_{itj} & = & \Gamma_{i} + N(0,1), \quad \mathrm{for} \quad j=1,2,
                \nonumber
\end{eqnarray*}
where $\alpha_i = \varrho\Gamma_i +\sqrt{(1-\varrho^2)}A_i$ , with
$A_i,\Gamma_i \sim N(1,3)$, and $\varrho=0.5$. Finally,
$\varepsilon_{it}$ is an idiosyncratic standard normal error term. We
let the coefficients $\th = (\th_1,   \th_2)'$ be equal to $(1, 1)'$. In this
design, we explore the size properties of the proposed test by comparing
the OW-FE with the TW-GFE estimator. 

In order to investigate the power of the proposed test, the scenario
generated under the alternative hypothesis is a linear panel data
model with interactive fixed effects. Specifically, samples are
generated according to a simplified version of the design outlined by
\cite{Bai2009}, with one latent factor:
\begin{eqnarray*}\label{eq:dgp3}
 y_{it}& = & x_{it1}\th_1 + x_{it2}\th_2 + \alpha_i\zeta_t + 
             \varepsilon_{it}, \\
  x_{it} & =  & \Gamma_i\zeta_t + N(0,1). \nonumber
\end{eqnarray*}
where $\zeta_t \sim N(1,\sigma^2_{\zeta_t})$. We consider
three values  of the standard deviation of the time factor, $\sigma_{\zeta_t}=(1,3,5)$.
We denote this design as DGP-IFE-L.  

For each scenario, we consider $N = 50,100$, $T = 10, 20$, and $199$
bootstrap replications  when computing the variance-covariance matrix of the vector of contrasts,
  for each of the $1000$ Monte Carlo replications.  It
is worth recalling that the performance of the TW-GFE estimator is
closely linked to the number of groups chosen for the first-step {\em
  kmeans} clustering. Even following the rule outlined in Remark
\ref{remark:kandl}, this number depends on the variability in the data,
which affects how informative $h_i, w_t$ are about the unobserved
heterogeneity, and the user-defined parameter $\gamma$. We account for
the former by allowing for large variances in the composite error
term, and for the latter by running scenarios where
$\gamma=0.2,0.5,1$, resulting in a decreasing number of clusters.

Table \ref{tab:sim_res_lin_FE} reports the average of the Hausman test $\hat{H}$ in
\eqref{eq:H_asym} across simulations, for both uncentered and centered
test statistics, along with the  respective empirical size
based on the quantile of the central $\chi^2_r$ with $r=2$.
When evaluating the null hypothesis, we report the empirical size for three
values of statistical significance (0.05, 0.1, 0.2) in order to show that the proposed
test exhibits a uniform p-value.
In Appendix \ref{tabelle_solo_stimatori}, we also present the average bias
and standard deviation for all estimators involved, under both the null and the alternative
hypotheses. For the TW-GFE estimator we also report the average selected number of groups in the first step,
according to the rule outlined in Remark \ref{remark:kandl}.

As expected, the empirical size of the uncentered test 
statistics  does not attain the nominal one.
 Instead, the centered Hausman test exhibits an empirical size that is always
close to the nominal one.
The $\ga$
parameter seems to have no effect on the size of the test: coherently with the theory, the number of groups found by
TW-GFE for the time dimension is always equal to one (see Table \ref{tab:sim_res_lin_FE_stim} in Appendix  \ref{tabelle_solo_stimatori}), meaning that the estimator manages to identify the absence of variation of heterogeneity over time.
 
 Table \ref{tab:sim_res_lin_FE_H1} reports results under the alternative hypothesis. The proposed test
 presents good rejection rates when the true DGP has a factor structure for the unobserved
heterogeneity, with power improving as both the standard deviation of the 
time factor and the $N,T$ dimensions increase. Accordingly, under the alternative hypothesis TW-GFE and its jackknife counterpart  exhibit a bias which
is substantially smaller with respect to ML and its bias corrected versions.
Moreover, we conjecture that further increasing the number of individuals/time occasions or enriching the
information entailed in factors may further improve the power of the test.

%%%
%\clearpage
%%%%%%NUOVA tabella  %%%%%%%%%%%%%%%
\begin{landscape} 
\begin{table}
  \caption{Size analysis: DGP-FE-L, OW-FE vs TW-GFE}
  \label{tab:sim_res_lin_FE}
  \begin{center}
    \begin{footnotesize}
      \begin{tabular}{lcccccccccccc}
        \hline
        \\
% &&   \multicolumn{12}{c}{\textbf{DGP-FE-L}} \\% & &  \multicolumn{2}{c}{DGP-AFE-L} &  \multicolumn{2}{c}
 &   \multicolumn{4}{c}{$\gamma = 0.2$} & \multicolumn{4}{c}{$\gamma = 0.5$} & \multicolumn{4}{c}{$\gamma = 1$} \\[2pt]
% \cline{4-5} \cline{9-10}\\
&  \multicolumn{2}{c}{T=10} &  \multicolumn{2}{c}{T=20} &  \multicolumn{2}{c}{T=10}&  \multicolumn{2}{c}{T=20}&  \multicolumn{2}{c}{T=10}&  \multicolumn{2}{c}{T=20}\\[2pt]
%\cline{3-6} \cline{8-11} \\
$N$ & \multicolumn{2}{c}{$50$\quad$100$} & \multicolumn{2}{c}{$50$\quad$100$} & \multicolumn{2}{c}{$50$\quad$100$} & \multicolumn{2}{c}{$50$\quad$100$} & \multicolumn{2}{c}{$50$\quad$100$} & \multicolumn{2}{c}{$50$\quad$100$}\\[2pt]
\hline \\  %0.05
Unc. H & 5.195 & 8.477 & 4.361 & 6.897 & 7.434 & 12.769 & 6.247 & 10.845 & 10.106 & 17.401 & 8.544 & 14.436 \\ [3 pt]
  size 0.05 & 0.323 & 0.584 & 0.266 & 0.459 & 0.511 & 0.799 & 0.395 & 0.717 & 0.654 & 0.924 & 0.589 & 0.839 \\ [3 pt]
  size 0.1 & 0.422 & 0.686 & 0.358 & 0.591 & 0.623 & 0.872 & 0.506 & 0.818 & 0.749 & 0.950 & 0.696 & 0.902 \\ [3 pt]
  size 0.2 & 0.557 & 0.798 & 0.475 & 0.732 & 0.728 & 0.940 & 0.643 & 0.884 & 0.841 & 0.977 & 0.801 & 0.951 \\ [3 pt]
 \hline \\  
  JK H & 2.020 & 2.191 & 1.904 & 1.964 & 2.102 & 2.095 & 1.987 & 2.007 & 2.083 & 2.131 & 2.105 & 2.067 \\ [3 pt]
  size 0.05  & 0.052 & 0.062 & 0.042 & 0.048 & 0.050 & 0.053 & 0.051 & 0.050 & 0.051 & 0.059 & 0.060 & 0.052 \\ [3 pt]
  size 0.1  & 0.103 & 0.116 & 0.082 & 0.093 & 0.110 & 0.100 & 0.091 & 0.101 & 0.105 & 0.112 & 0.100 & 0.111 \\ [3 pt]
  size 0.2 & 0.191 & 0.225 & 0.179 & 0.178 & 0.210 & 0.221 & 0.188 & 0.210 & 0.209 & 0.214 & 0.204 & 0.220 \\ [3 pt]
  \hline \\
  \end{tabular}
  \end{footnotesize}
  \end{center}
\begin{scriptsize}
  1000 Monte Carlo (MC) replications. ``Unc H'' is the average of the uncentered
  Hausman test statistic, across MC replications. `` JK H ''  is the average of the centered
  Hausman test statistic, across MC replications.
  ``size'' denotes the
  rejection rate for a nominal size of 5,10 and 20\%. 
 199 bootstrap  replications used for the jackknife correction.  
  \end{scriptsize}  
\end{table}
\end{landscape}

%------------------------------------------------------------------
%-------------------------TABELLA COMPLETA FE H1 lineare ------------------------------------

\begin{table}
  \caption{Power analysis: DGP-IFE-L, OW-FE vs TW-GFE}
  \label{tab:sim_res_lin_FE_H1}
  \begin{center}
 %   \begin{footnotesize}
      \begin{tabular}{llcccc}
        \hline
        \\
 &&   \multicolumn{4}{c}{$\gamma = 0.2$}\\[2pt] % & \multicolumn{4}{c}{$\gamma = 0.5$} & \multicolumn{4}{c}{$\gamma = 1$} \\
% \cline{4-5} \cline{9-10}\\
&&  \multicolumn{2}{c}{T=10} &  \multicolumn{2}{c}{T=20}\\[2pt] % &  \multicolumn{2}{c}{T=10}&  \multicolumn{2}{c}{T=20}&  \multicolumn{2}{c}{T=10}&  \multicolumn{2}{c}{T=20}\\
%\cline{3-6} \cline{8-11} \\
$N$& & \multicolumn{2}{c}{$50$\quad$100$} &
                                            \multicolumn{2}{c}{$50$\quad$100$} \\[2pt] % & \multicolumn{2}{c}{$50$\quad$100$} & \multicolumn{2}{c}{$50$\quad$100$} & \multicolumn{2}{c}{$50$\quad$100$} & \multicolumn{2}{c}{$50$\quad$100$}\\
\hline \\
\multirow{4}*{$\sigma_{\zeta_t}=1$}  \\ [ 2pt] %mettece lo spazio
&Unc. H & 42.579 & 89.786 & 49.375 & 106.501 \\  [ 2pt] 
  &Rejection rate & 0.976 & 0.999 & 0.987 & 0.999 \\  [ 2pt] 
  &JK H & 6.186 & 7.514 & 5.322 & 5.611 \\ [ 2pt] 
 & Rejection rate & 0.301 & 0.346 & 0.268 & 0.276 \\ [ 2pt] 
  \multirow{4}*{$\sigma_{\zeta_t}=3$}  \\ [ 2pt] %mettece lo spazio
  &Unc. H  & 45.736 & 93.446 & 90.061 & 197.193 \\  [ 2pt] 
  &Rejection rate & 0.828 & 0.894 & 0.914 & 0.944 \\ [ 2pt] 
  &JK H & 15.093 & 21.841 & 23.336 & 34.225 \\ [ 2pt] 
  &Rejection rate & 0.573 & 0.672 & 0.642 & 0.758 \\ [ 2pt] 
  \multirow{4}*{$\sigma_{\zeta_t}=5$}  \\ [ 2pt] %mettece lo spazio
  &Unc. H  & 39.195 & 81.273 & 83.788 & 171.059 \\ [ 2pt] 
  &Rejection rate & 0.772 & 0.861 & 0.847 & 0.916 \\ [ 2pt] 
  &JK H & 16.383 & 25.337 & 26.598 & 40.769 \\ [ 2pt] 
  &Rejection rate & 0.600 & 0.724 & 0.686 & 0.786 \\ [ 2pt] 
 \hline \\
 \end{tabular}
% \end{footnotesize}
 \end{center}
 \begin{scriptsize}
 1000 Monte Carlo (MC) replications. ``Unc H'' is the average of the uncentered
  Hausman test statistic, across MC replications. `` JK H ''  is the average of the centered
  Hausman test statistic, across MC replications.
  ``Rejection rate'' denotes the
  rejection rate for a nominal size of 5\%. ``$\sigma_{\zeta_t}$'' refers to the standard deviation of the 
  time factor.
199 replications used for the jackknife correction.  
  \end{scriptsize}  
\end{table}

 \subsection{Probit model}
 We investigate the small sample properties of the test in a
 nonlinear setting, specifically by considering the probit model. 
 For $i = 1, \ldots, N$ and $t=1, \ldots, T$,
 we generate samples according to the following equations, which we
 denote as DGP-FE-P:
 \begin{eqnarray*}\label{eq:dgp1nl}
  y_{it}&= &\Indic\left(x_{it1}\th_1 + x_{it2}\th_2 +  \alpha_i + \varepsilon_{it} \geq0\right),  \\
  x_{its}&=&\sqrt{(1/5)}\left[\Gamma_{i} + N(0,1)\right], \quad \text{for}\,\, s=1,2,\nonumber
   \end{eqnarray*}
   where $\Indic(\cdot)$ is an indicator function, $\alpha_i = \varrho\Gamma_{i} +\sqrt{(1-\varrho^2)}A_i$,
   $A_i,\Gamma_{i} \sim N(1,1)$, $\varrho=0.5$, and 
   $\varepsilon_{it}$ is an idiosyncratic standard normal error
   term. The slope parameters $\th=[\th_1, \th_2]$ are both equal to 1.
   
  We evaluate the power of the proposed test  under
 the alternative hypothesis of interactive fixed effects with one latent
 factor, that is
 \begin{eqnarray*}\label{eq:dgp3nl}
   y_{it} & =  & \Indic\{x_{it1}\th_1 + x_{it2}\th_2 +  \alpha_i\zeta_t+ \varepsilon_{it} \geq0\},\\
   x_{it} & =  & \kappa[\Gamma_i\zeta_t + N(0,1)], \nonumber
 \end{eqnarray*}
 where  the product $\alpha_i\zeta_t$ is rescaled to have unit
 variance and  $\zeta_t \sim N(1,\sigma^2_{\zeta_t})$.  We refer to this design as DGP-IFE-P.

  For each experiment, we  consider $N = 50,100$, $T = 10,20$, and $199$ bootstrap draws when computing the variance-covariance
 matrix of the vector of contrasts,  in $1000$ Monte Carlo replications.
  When dealing with DGP-IFE-P,  we employ three values of the standard deviation of time factor $\sigma_{\zeta_t}=\sqrt{0.5},1,\sqrt{2}$. We focus on $\ga=1$ only.
  It is useful to recall that population moments used to cluster the unobserved heterogeneity may include the average of the binary dependent variable. In this case, however, the individual averages of $y_{it}$  may not
 provide enough information to detect more complex forms of unobserved
 heterogeneity, introducing only noise instead. For this reason we exclude the sample moments of the dependent variable in the clustering step.

 Tables \ref{tab:sim_res_nonlin_FE} and \ref{tab:sim_res_bin_FE_H1} report size and power analysis. Table \ref{tab:sim_res_bin_FE_stim} and \ref{tab:sim_res_bin_FE_H1_stim} in Appendix \ref{tabelle_solo_stimatori} report average bias and standard deviation of the estimators.
In the size analysis we report the empirical rejection rate for three
values of statistical significance (0.05, 0.1, 0.2).
As expected, the centered  Hausman test approaches the correct size and has uniform p-value, while
 the one based on biased estimators is not reliable. Unlike what happens with the linear model, here the test starts approaching the  nominal size when the $T$ dimension grows larger, namely from $T=20$ onward. However, due to the more intense computational cost in the nonlinear case, we restrict our analysis to smaller samples. 

 The power analysis shows that the test has also good power properties, increasing both in the $N,T$ dimensions and in the variance of the time factor. We conjecture that further increasing the sample size could lead to a larger rejection rate.

\begin{table}
  \caption{Size analysis, Probit model: DGP-FE-P, OW-FE vs TW-GFE}
  \label{tab:sim_res_nonlin_FE}
  \begin{center}
 %   \begin{footnotesize}
      \begin{tabular}{llcccc}
        \hline
        \\
 &&   \multicolumn{4}{c}{$\gamma = 1$}\\[2pt] 
&&  \multicolumn{2}{c}{T=10} &  \multicolumn{2}{c}{T=20}\\[2pt] 
$N$ && \multicolumn{2}{c}{$50$\quad$100$} &
                                            \multicolumn{2}{c}{$50$\quad$100$} \\[2pt] 
                                            \hline \\
Unc H && 7.074 & 15.125 & 12.944 & 27.297   \\ [2pt]
  size 0.05 && 0.595 & 0.989 & 0.890 & 0.999   \\ [2pt]
    size 0.1 && 0.783 & 0.997 & 0.949 & 1.000   \\ [2pt]
      size 0.2 && 0.923 & 1.000 & 0.973 & 1.000   \\[2pt]
      \hline \\
  JK H && 1.655 & 1.675 & 2.126 & 1.932   \\ [2pt]
  size 0.05 && 0.033 & 0.037 & 0.068 & 0.048   \\ [2pt]
  size 0.1 && 0.058 & 0.063 & 0.111 & 0.081   \\ [2pt]
  size 0.2 && 0.141 & 0.143 & 0.217 & 0.176   \\ [2pt]
  \hline \\
 \end{tabular}
% \end{footnotesize}
 \end{center}
 \begin{scriptsize}
  1000 Monte Carlo  replications. ``Unc H'' is the average of the uncentered
  Hausman test statistic and `` JK H ''  is the average of the centered
  Hausman test statistic across Monte Carlo replications.
  ``size'' denotes the
  rejection rate for a nominal size of 5,10 and 20\%. 
 199 bootstrap  replications used for the jackknife correction.   
  \end{scriptsize}  
\end{table}

\begin{table}
  \caption{Power analysis, Probit model: DGP-IFE-P, OW-FE vs TW-GFE}
  \label{tab:sim_res_bin_FE_H1}
  \begin{center}
 %   \begin{footnotesize}
      \begin{tabular}{llcccc}
        \hline
        \\
 &&   \multicolumn{4}{c}{$\gamma = 1$}\\[2pt] % & \multicolumn{4}{c}{$\gamma = 0.5$} & \multicolumn{4}{c}{$\gamma = 1$} \\
% \cline{4-5} \cline{9-10}\\
&&  \multicolumn{2}{c}{T=10} &  \multicolumn{2}{c}{T=20}\\[2pt] % &  \multicolumn{2}{c}{T=10}&  \multicolumn{2}{c}{T=20}&  \multicolumn{2}{c}{T=10}&  \multicolumn{2}{c}{T=20}\\
%\cline{3-6} \cline{8-11} \\
$N$& & \multicolumn{2}{c}{$50$\quad$100$} &
                                            \multicolumn{2}{c}{$50$\quad$100$} \\[2pt] % & \multicolumn{2}{c}{$50$\quad$100$} & \multicolumn{2}{c}{$50$\quad$100$} & \multicolumn{2}{c}{$50$\quad$100$} & \multicolumn{2}{c}{$50$\quad$100$}\\
\hline \\
\multirow{4}*{$\sigma^2_{\zeta_t}=0.5$}  \\ [ 2pt] 
&Unc-H & 5.456 & 12.448 & 16.478 & 37.941 \\ 
 & Rejection rate  & 0.367 & 0.955 & 0.975 & 0.999 \\ 
 & JK-H & 3.491 & 4.661 & 5.266 & 7.343 \\ 
 & Rejection rate & 0.168 & 0.239 & 0.259 & 0.373 \\
\multirow{4}*{$\sigma^2_{\zeta_t}=1$}  \\ [ 2pt] 
&Unc-H & 5.596 & 12.733 & 18.353 & 41.801 \\ [ 2pt] 
&Rejection rate & 0.368 & 0.970 & 0.993 & 1.000 \\ [ 2pt] 
 & JK-H & 3.806 & 5.122 & 5.088 & 7.414 \\ [ 2pt] 
& Rejection rate & 0.174 & 0.258 & 0.288 & 0.434 \\  
  \multirow{4}*{$\sigma^2_{\zeta_t}=2$}  \\ [ 2pt] 
&Unc-H & 5.012 & 11.373 & 17.627 & 39.671 \\ [ 2pt] 
 & Rejection rate  & 0.291 & 0.915 & 0.978 & 1.000 \\ [ 2pt] 
 & JK-H & 3.498 & 4.754 & 5.309 & 7.643 \\ [ 2pt] 
 & Rejection rate & 0.160 & 0.247 & 0.313 & 0.483 \\  
  \hline \\
 \end{tabular}
% \end{footnotesize}
 \end{center}
 \begin{scriptsize}
 1000 Monte Carlo (MC) replications. ``Unc H'' is the average of the uncentered
  Hausman test statistic, across MC replications. `` JK H ''  is the average of the centered
  Hausman test statistic, across MC replications.
  ``Rejection rate'' denotes the
  rejection rate for a nominal size of 5\%. ``$\sigma^2_{\zeta_t}$'' refers to the variance of the 
  time factor.
199 replications used for the jackknife correction.  
  \end{scriptsize}  
\end{table}

\subsection{Comparison with CRT and BBP tests}

Table \ref{CRT_test} reports the simulation results for the CRT
and BBP tests under the null and alternative hypotheses described by
DGP-FE-L and DGP-IFE-L. In particular, the CRT test can be implemented by
considering the null hypothesis of no factor structure with
time-constant individual effects and the alternative hypothesis of a
factor model with one latent factor (DGP-IFE-L).The test fails to attain
the correct size in short panels, as also reported by
\cite{castagnetti2015}.  The BBP test has good size properties in both
linear and nonlinear frameworks, while it displays remarkably low power, due
to the absence of a dynamic factor structure in DGP-IFE-L, as also discussed in
Subsection \ref{ALTRITEST}.

%%%TABELLA CRT e BBP   %%%%%
\begin{table}
\caption{Size and power analyses: Linear and probit model, CRT and BBP tests}
\label{CRT_test}
\begin{center}
  \begin{footnotesize}
\begin{tabular}{lcccccccccc}
\hline \\
\smallskip
&& \multicolumn{4}{c}{\textbf{DGP-FE}} &&\multicolumn{4}{c}{\textbf{DGP-IFE}} \\
\cline{3-6} \cline{8-11} \\
&&\multicolumn{2}{c}{T=10} & \multicolumn{2}{c}{T=20} &&\multicolumn{2}{c}{T=10} & \multicolumn{2}{c}{T=20}\\
%\cline{3-6} \cline{8-11} \\
         & $N$& 50    &100& 50   &100&& 50    &100& 50     &100 \\[3pt]
\cline{2-11}\\

Linear        &       &         &         &        &          &&$\sigma^2_{\zeta_t}=1$   &          &         &\\[5 pt]
&CRT&0.994 &0.994 &1.000& 1.000 && 0.996&   1.000&1.000& 1.000\\[3pt]
         &BBP&0.069 & 0.059& 0.06 & 0.063&& 0.105& 0.137 & 0.121 & 0.055  \\
        &       &         &         &        &          &&$\sigma^2_{\zeta_t}=3$  &          &         &\\[5 pt]
         &CRT&0.994 &0.994 &1.000& 1.000 && 1.000&   1.000&1.000& 1.000\\[3pt]
         &BBP&0.069 & 0.059& 0.06 & 0.063&& 0.056 & 0.067 &0.059  &0.068   \\
                 &       &         &         &        &          &&$\sigma^2_{\zeta_t}=5$   &          &         &\\[ 3 pt]
&CRT&0.994 &0.994 &1.000& 1.000 && 1.000&   1.000&1.000& 1.000\\[3pt]
         &BBP&0.069 & 0.059& 0.06 & 0.063&& 0.044& 0.03 & 0.055 & 0.066  \\
\cline{2-11}\\
Probit       &       &         &         &        &          &&$\sigma^2_{\zeta_t}=0.5$   &          &         &\\[5 pt]
&BBP&0.076& 0.073& 0.077& 0.061&&0.091 &0.1 & 0.07& 0.097 \\[5pt]  %v3
             &       &         &         &        &          &&$\sigma^2_{\zeta_t}=2$   &          &         &\\[5 pt]       
       &       &         &         &        &          &&0.105&0.103&0.097&0.107 \\ 
      &       &         &         &        &          &&$\sigma^2_{\zeta_t}=2$   &          &         &\\[5 pt]       
 &&& & & &&0.122&0.11& 0.112& 0.11 \\[2pt]  %v3
  \hline \\
\end{tabular}
\end{footnotesize}
\end{center}
\end{table}

\subsection{Testing for parallel trends in DID  }
In this Section we evaluate the performance of the proposed test when the null hypothesis is  the presence of a common trend
between treated and untreated units. 
We generate data from the linear model: 
\begin{eqnarray*}
 y_{it}& = & x_{it1}\th_1 + x_{it2}\th_2 + \xi_{i}  + \zeta_t + 
             \varepsilon_{it}, \\
  x_{it} & =  & \xi_{i}    + \zeta_t + N(0,4). \nonumber
\end{eqnarray*}
where  $\xi_{i} \sim N(1,1)$, the coefficients $\th = (\th_1,   \th_2)'$ are both equal to 1, $\zeta_t  \sim N(1,3)$ and $\varepsilon_{it}$ is an idiosyncratic standard normal error term.
We refer to this DGP as DGP-DID-0. 
After the de-meaning procedure, DGP-DID-0 traces back to a model where unobservable traits are time-varying.

 Under the alternative hypothesis we generate data according to the following model:
 \begin{eqnarray}
 y_{it}& = & x_{it1}\th_1 + x_{it2}\th_2 + \xi_{i}  +  \zeta_t [(1 - D_i)\delta_1 + D_i\delta_2] + 
             \varepsilon_{it}, \\
  x_{its} & =  & \xi_{i}    +   \zeta_t [(1 - D_i)\delta_1 + D_i\delta_2]+ N(0,4) \qquad s=1,2. \nonumber
\end{eqnarray}
where $D_i=1$ for a proportion $p=0.5$  of treated individuals and 0 otherwise, while $\delta_1=1$ and $\delta_2=2,3$.  In this way, we generate two different trends for treated and untreated units, ending up in a simplified version of a model with  heterogeneous trends: in this light, it can be seen as a factor model with two individual-specific loadings, $\delta_1$ and $\delta_2$. We refer to this DGP as DGP-DID-1.
 %%%%%%%
  For each experiment, we  consider $N = 10,20$, $T = 50$, and $199$ bootstrap draws  in $1000$ Monte Carlo replications.
 
 Table \ref{tab:sim_res_lin_DID_H0_H1} reports results of the simulation study: the left part of the table refers to the size analysis, while the right part refers to the power analysis. We report results for the two versions of the test statistics, the centered (“JK H”) and the uncentered (“Unc. H”).
 The  centered version of the Hausman test exhibits a uniform p-value between experiments. Using second moments in the clustering procedure may introduce noise, resulting in the identification of a greater number of groups than there actually are.  We find that the choice of $\ga=1$  is optimal, as it helps alleviating the noise, keeping under control the number of groups.\footnote{We conjecture that the noise tends to disappear as the $N,T$ dimensions grow larger. In unreported simulations available upon request,  we find that smaller values of $\ga$ lead to a slightly non-uniform p-value.}. We highlight that, in this setting, the  test requires a large $T$ dimension in order to achieve proper size control. Conclusions drawn on the uncentered test statistics are, instead, not reliable.
  
Finally, the test exhibits good power properties,  increasing in the difference between the two factor loadings $\delta_1$ and $\delta_2$.
%%%%%%%%%% TABELLA  DID SIZE E POWER  %%%%%%%%%%%% 
\begin{table}
  \caption{DID}
  \label{tab:sim_res_lin_DID_H0_H1}
  \begin{center}
 %   \begin{footnotesize}
      \begin{tabular}{llccllccc}
        \hline\\
        \emph{Size analysis} &&&  && \emph{Power analysis} &$\delta_1=1,\delta_2=2$&&                               \\ [5 pt]
 & &   \multicolumn{2}{c}{$\gamma = 1$}  &&&& \multicolumn{2}{c}{$\gamma = 1$} \\[2pt]
 $T=50$& & $N=10$ & $N=20$  &&   $T=50$& & $N=10$ & $N=20$   \\
\hline \\
Unc. H && 23.722 & 26.649  && Unc. H && 132.137 & 343.689   \\ 
  size.0.05 && 0.944 & 0.970 && Rejection rate && 0.999 & 1.000    \\ 
    size 0.1 && 0.960 & 0.982 & &  JK H && 16.722 & 14.076     \\ 
      size 0.2 && 0.983 & 0.994 & & Rejection rate && 0.808 & 0.753    \\ 
      \hline \\
      && &                          &&            &$\delta_1=1,\delta_2=3$  & &  \\[ 3pt]
  JK H && 2.227 & 1.976   &&       Unc. H && 147.852 & 391.046   \\
  size 0.05 && 0.058 & 0.043 &&    Rejection rate && 1.000 & 1.000   \\ 
  size 0.1 && 0.108 & 0.097 &&   JK H && 28.117 & 29.261     \\ 
  size 0.2 && 0.232 & 0.189 &&    Rejection rate && 0.880 & 0.863   \\ 
   \hline
 \end{tabular}
% \end{footnotesize}
 \end{center}
 \begin{scriptsize}
  1000 Monte Carlo (MC) replications. ``Unc. H'' is the average of the uncentered
  Hausman test statistic, across MC replications. `` JK H ''  is the average of the centered
  Hausman test statistic, across MC replications. “size” refer to the empirical size of the test, for three  values of statistical significance (0.05,0.1,0.2).
  ``Rejection rate'' denotes the
  rejection rate for a nominal size of 5\%. $\delta_1$ and $\delta_2$ refer to the loadings of the
  two time trends.
   199 replications used for the jackknife correction.  
  \end{scriptsize}    
\end{table}
\section{Empirical applications}\label{empirical}
We evaluate the proposed test on real data in  two applications, one concerning determinants of housing prices, the other concerning intertemporal  decisions on labor
market participation of working women.

\subsection{Determinants of housing prices}
In this section we evaluate the proposed test  by revisiting the empirical application on housing prices in \cite{holly2010spatio}.
This study argues that  accounting for a  factor structure for unobservable traits gives a better understanding of determinants of housing prices in the U.S. These conclusions are confirmed by an extended study which examines  housing data at a metropolitan level \citep{baltagi2014further}. Similar findings are reported in the empirical section of  \cite{freemanweidner2023}, where the Authors use the \cite{giglio2016no}’s dataset on housing market. The presence of critical events, such as the financial crisis in 2008 may introduce structural breaks, suggesting the adoption of a factor structure in the specification. 
The outcome variable in exam is the growth rate of the  housing price index for $N=49$ States, observed for $T=35$ time occasions, from 1977 to 2011. We observe additional covariates such as the growth rate of population, the US State real cost of borrowing, net of real house price appreciation/depreciation and the inflation rate.

Table \ref{tab:app_lin_houses} reports results for the estimation of a linear model. We report estimated coefficients for FE, jackknife, TW-GFE with $\ga=1$ and its jackknife version. Moreover, we report the estimated coefficients for the IFE estimator \citep{Bai2009} with $r=1$ factor.
Let us start from the second panel of Table \ref{tab:app_lin_houses}, reporting the outcome of the test: the proposed test rejects the null hypothesis of time-invariant heterogeneity. The first panel in Table \ref{tab:app_lin_houses} shows the different outcomes of estimation: real cost of borrowing and inflation are deemed  statistically significant by all estimation methods, while the population is  recognized as significant by none of them, but the TW-GFE jackknife. The real cost of borrowing is associated to a decrease of the growth rate of housing price index, while the inflation has  a strong positive effect.

\begin{table} 
  \caption{Estimation results and test: house prices}
  \label{tab:app_lin_houses}
  \begin{center}

\begin{tabular}{lrrrrr}
\hline \\
 & ML & J & IFE & GFE  & JGFE \\ 
  \hline\\
population growth & 0.052 & 0.054 & -0.003 & 0.043 & 0.454 \\ [2 pt]
                             & (0.082) & (0.082) & (0.057) & (0.033) & (0.033) \\ [2 pt]
  real cost of borrowing & -0.781 & -0.776 & -0.941 & -0.881 & -0.736 \\  [2 pt]
                                  & (0.010) & (0.010) & (0.010) & (0.009) & (0.009) \\  [2 pt]
 inflation & 0.482 & 0.498 & 0.244 & 0.290 & 1.360 \\ [2 pt]
              & (0.020) & (0.020) & (0.061) & (0.016) & (0.016) \\  [2 pt]
  \hline \\ 
  \emph{Hausman Test}   & &&&& \\  [2 pt]
  JK H 									&  & &  & 12.507 &  \\ 
  Unc H &  &  && 373.609 & \\ 
 $\chi^2_3$&  &  &  & 7.815 &  \\ 
  \hline \\
  K & - & - & - & 2 & 2\\ 
  L & - & - & - & 15 & 15 \\ 
\hline \\
  \end{tabular}
   \end{center}
   \begin{scriptsize}
      Standard errors in parentheses. “IFE” is the \cite{Bai2009}’s estimator. ``Unc H'' is  the uncentered
  Hausman test statistic. `` JK H ''  is  the centered
  Hausman test statistic. ``$\chi^2_3$ crit.'' is the $95^{th}$
      percentile of the standard chi-squared distribution with 3 d.o.f. $K$ and $L$ are the number of groups for individuals and time occasions found by TW-GFE in the first step. 299 bootstrap replications used for the jackknife correction.  500 bootstrap replications used for standard errors of IFE estimator. $N=49$,$T=35$.
   \end{scriptsize}
\end{table}
\subsection{A model for determinants of inter-temporal occupational decisions of working women}

We apply the proposed Hausman test on a popular dataset concerning  inter-temporal labor supply
decisions of women, employed in many studies \citep[see, for instance,][]{fernandez2009fixed,dj2015}. Data are relative to
the employment status of $N=1461$ married women aged between 18 and 60
years in 1985, whose husbands were always employed in the period 1980-1988 (T=9) (PSID
waves 15-22).
Studies on relationship between fertility and employment make use of standard nonlinear models with specifications
that include one or more lags of the occupational status \citep{hyslop1999state}. We estimate a 
static logit model and compute the proposed test comparing the FE estimator with fixed effects at individual level and the TW-GFE.  Due to the assumptions on TW-GFE we can not estimate a dynamic model, while we  include all the other variables described in the empirical application in \cite{dj2015}: the number of kids of different ages and the logarithm of the yearly income of the husband.  
Table \ref{tab:app_nonlin_lfp} reports results of the  estimation, together with the outcome of the Hausman test. The first two columns refers to the one-way fixed effects.  We present the results of the  TW-GFE estimation using  three values of $\ga=0.2,0.5,1$, corresponding to a decreasing number of groups found in the data by the \emph{k-means} algorithm. 
The proposed test  fails to reject the null hypothesis of time-invariant heterogeneity, meaning that a logit model with individual fixed effects  gives right quantification of the parameters of interest. This suggested specification for the  unobserved heterogeneity is in line with that mostly used in the labour economics literature. 

As for the estimated coefficients, they are all in line with economic intuition: the presence of  small children and a high income of the husband are negatively associated with the labor force participation. Moreover, the coefficients associated to children in age 0-2, children in age 3-5 and  the income of the husband are statistically significant for all estimators (but for the TW-GFE with $\ga=1$ for the income).

%%%%%TABELLA %%%%%%%%%%%
\begin{table} 
  \caption{Estimation results and test: labour market participation }
  \label{tab:app_nonlin_lfp}
  \begin{center}
  \begin{footnotesize}
\begin{tabular}{lrrrrrrrr}
  \hline \\
 & ML & J & GFE 0.2 & GFE 0.5 & GFE 1 & JGFE 0.2 & JGFE 0.5 & JGFE 1 \\ 
  \hline \\
child (0-2) & -0.71 & -0.61 & -0.30 & -0.33 & -0.36 & -0.25 & -0.38 & -0.52 \\ 
          & (0.055) & (0.055) & (0.050) & (0.041) & (0.036) & (0.050) & (0.041) & (0.036) \\[2 pt]
  child (3-5) & -0.34 & -0.31 & -0.14 & -0.13 & -0.18 & -0.27 & -0.07 & -0.25 \\ 
          & (0.049) & (0.049) & (0.054) & (0.042) & (0.035) & (0.054) & (0.042) & (0.035) \\  [2 pt]
  child (6-17) & 0.01 & 0.01 & 0 & 0 & -0.03 & 0.02 & -0.04 & 0.07 \\ 
            & (0.035) & (0.035) & (0.037) & (0.029) & (0.023) & (0.037) & (0.029) & (0.023) \\ [2 pt] 
  income husband & -0.21 & -0.19 & -0.14 & -0.11 & -0.11 & -0.14 & 0.09 & -0.02 \\   
			 & (0.054) & (0.054) & (0.039) & (0.032) & (0.029) & (0.039) & (0.032) & (0.029) \\ [2 pt]
  \hline \\
 \emph{Hausman test} & &  &  &  &  &  & & \\ [2 pt]
 JK H & &  & 1.24 & 2.54 & 0.76 &  & & \\ 
Unc H &  & & 47.88 & 77.97 & 79.48 &  &  & \\ 
$\chi^2_4$&  &  & 9.49 & 9.49 & 9.49 &  &  &  \\ 
  \hline \\
  K & - & -& 268 & 82 & 31 & 268& 82 & 31 \\ 
  L & -& - & 7 & 5 & 4 & 7 & 5 & 4 \\ 
   \hline\\
   \end{tabular}
   \end{footnotesize}
   \end{center}
   \begin{scriptsize}
      Standard errors in parentheses. ``Unc H'' is  the uncentered
  Hausman test statistic. `` JK H ''  is  the centered
  Hausman test statistic. ``$\chi^2_4$ crit.'' is the $95^{th}$
      percentile of the standard chi-squared distribution with 4 d.o.f. $K$ and $L$ are the number of groups for individuals and time occasions found by TW-GFE in the first step.The dependent variable is not used in the clustering procedure.  299 bootstrap replications used for the jackknife correction.   $N=1461, T=9$
   \end{scriptsize}
\end{table}
%%%%%%%%%%%%%%%%%%

\section{Final remarks}
\label{conclusion}

We propose a specification test for the form of the unobserved
heterogeneity in panel data models. The test is based on the
recently proposed TW-GFE approach and serves to
detect departures from the commonly assumed time-invariant specification.

The main advantage of our proposal is that it allows practitioners to
avoid the specification and estimation of models with complex forms of
time-varying heterogeneity, which might pose identification and
computational problems in both linear and nonlinear models. By
contrast, the TW-GFE approach is a rather simple non-iterative
two-step strategy, involving unsupervised clustering in the first step
and estimation of group effects in the second.
The proposed approach is a Hausman test constrasting the ML and TW-GFE estimators.
We show that, under the null hypothesis of time-invariant effects, the TW-GFE
estimator is equivalent  to the OW-GFE, implying that the Hausman
statistic has asymptotic $\chi^2$ distribution.
This distribution is however  non-centered because of the bias arising
from incidental parameters, for both the ML (at least in the non-linear case) and TW-GFE estimators, and from the
approximation error induced by the discretization of the unobserved
heterogeneity, that arises with the GFE
approach. 
We
make the statistic pivotal by reducing the bias of the vector
of contrasts with leave-one-out jackknife.
 Parametric bootstrap is then used  to estimate the variance of the
vector of contrasts.

The two empirical applications considered suggest that the test is reliable with real world data 
and suggest specifications for unobserved heterogeneity in line with the relevant literature.
Finally, the proposed test also emerges as a viable alternative to existing
procedures with short panel datasets and can also be applied to test for the common trend assumption in the DiD setting.

\section*{Acknowledgments}
 Claudia Pigini and Alessandro 
   Pionati would like to acknowledge the financial support by  Project Title: ``The use of the Grouped Fixed Effects estimator in panel data analysis addressing unobserved heterogeneity”, financial coverage D.D. MUR 47/2025, CUP I33C25000280001.

\clearpage
\bibliography{biblio3.bib}
\bibliographystyle{apalike}
\clearpage

\appendix

\section{Proofs}
\subsection{Proof of Theorem   \ref{th:tw-gfe}}\label{p_th_tw_gfe}
\begin{lemma}
  \label{lemma:L1}
  Under Assumptions \ref{ass:uh}-\ref{ass:mom} and  when data are sampled
  under $H_0$, that is $\psi(\lambda_{t0})
  = \psi(\lambda_0)$, $\hat{L} = 1$ w.p.a. 1. as $N\to\infty$
  \begin{proof} Let $L = 1$ and define
    $\hat{w}_C = \frac{1}{T}\sum_{t=1}^T w_t$, that is the only
    centroid.  The {\em kmeans} objective function is then
    $\hat{Q}(1) = \frac{1}{T}\sum_{t=1}^T|| w_t - \hat{w}_C ||^2$ and
    by the Cauchy-Schwartz inequality we have
    \[
\hat{Q}(1) = \frac{1}{T}\sum_{t=1}^T|| w_t -
    \hat{w}_C ||^2 = \frac{1}{T}\sum_{t=1}^T|| w_t - \psi(\lambda_0) +
    \psi(\lambda_0) -  \hat{w}_C ||^2
  \]
  \[
\leq \frac{1}{T}\sum_{t=1}^T|| w_t - \psi(\lambda_0)||^2 +
\frac{1}{T}\sum_{t=1}^T|| \psi(\lambda_0) - \hat{w}_C||^2 +
\frac{2}{T}|| \psi(\lambda_0) - \hat{w}_C|| \sum_{t=1}^T|| w_t - \psi(\lambda_0)||
  \]  
  As $N \rightarrow \infty$, the first term on the rhs is
  $O_p\left(\frac{1}{N} \right)$ by Assumption \ref{ass:mom}. It is
  also an infeasible estimator of $V_w$, so that it can be written as
  $\mathrm{E}\left[||w_t - \psi(\lambda_0) ||^2\right] +
  o_p\left(\frac{1}{N}\right)$. The second and third terms involving
    $|| \psi(\lambda_0) - \hat{w}_C||$ are $o_p(1)$, since
    $\mathrm{plim}_{N\to\infty} w_t = \psi(\lambda_0)$ again by
    Assumption \ref{ass:mom} under $H_0$ so that
    $\mathrm{plim}_{N\to\infty} \hat{w}_C = \frac{1}{T} T
    \psi(\lambda_0) = \psi(\lambda_0)$. Intuitively, there is no
    approximation error from the {\em kmeans}, as the heterogeneity is
    discrete with a unique support point.
%Therefore $\hat{Q}(1) \leq O_p\left(\frac{1}{N} \right)$.

Since $\hat{V}_w = \mathrm{E}\left[||w_t - \psi(\lambda_0) ||^2
\right] + o_p\left(\frac{1}{N} \right)$, then $\hat{Q}(1)$ is equal or
smaller than
$\hat{V}_w$ as $N\to\infty$. As the {\em kmeans} objective function
in nondecreasing in $L$, the deterministic rule 
in Remark \ref{remark:kandl} will deliver $\hat{L} = 1$ w.p.a. 1 as
$N\to\infty$. 
  \end{proof}
\end{lemma}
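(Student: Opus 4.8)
The plan is to apply the selection rule of Remark \ref{remark:kandl} (i) directly at its smallest admissible value. Since $\hat{L}=\min_{L\ge 1}\{L:\hat{Q}(L)\le\gamma_L\hat{V}_w\}$ and $L=1$ is the minimal candidate, it is enough to show that the threshold is met already at $L=1$, i.e. $\hat{Q}(1)\le\gamma_L\hat{V}_w$ with probability approaching one; the conclusion $\hat{L}=1$ then follows with no appeal to monotonicity of the objective. With a single cluster the unique centroid is the sample mean $\hat{w}_C=\frac{1}{T}\sum_{t=1}^T w_t$, so the whole argument reduces to controlling the within-dispersion $\hat{Q}(1)=\frac{1}{T}\sum_{t=1}^T\Vert w_t-\hat{w}_C\Vert^2$ under the null, where $\psi(\lambda_{t0})=\psi(\lambda_0)$ for every $t$.

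First I would recenter each moment at the common limit $\psi(\lambda_0)$, writing $w_t-\hat{w}_C=(w_t-\psi(\lambda_0))-(\hat{w}_C-\psi(\lambda_0))$ and expanding the square into a leading term $\frac{1}{T}\sum_t\Vert w_t-\psi(\lambda_0)\Vert^2$, a centroid term $\Vert\hat{w}_C-\psi(\lambda_0)\Vert^2$, and a cross term. By Assumption \ref{ass:mom} the leading term is $O_p(1/N)$ and, being an infeasible analogue of the moment variance, equals $\mathbb{E}[\Vert w_t-\psi(\lambda_0)\Vert^2]+o_p(1/N)$, which is precisely $\hat{V}_w$ up to $o_p(1/N)$. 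The same assumption gives $\mathrm{plim}_{N\to\infty}w_t=\psi(\lambda_0)$ for each $t$, hence $\hat{w}_C\convp\psi(\lambda_0)$ with $\hat{w}_C-\psi(\lambda_0)=O_p(1/\sqrt{N})$, so the centroid and cross terms are of smaller order than the leading one. Collecting these pieces delivers $\hat{Q}(1)\le\hat{V}_w\,(1+o_p(1))$, which already establishes the claim in the boundary case $\gamma_L=1$.

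The main obstacle is sharpening this to $\hat{Q}(1)\le\gamma_L\hat{V}_w$ for a generic $\gamma_L\in(0,1)$, since the Cauchy--Schwarz bound above is too loose to beat a strict fraction of $\hat{V}_w$. The key observation I would exploit is that subtracting the common centroid $\hat{w}_C$ annihilates the part of $w_t$ that is shared across periods, and under $H_0$ that shared part carries the entire individual-heterogeneity signal: a variance decomposition based on the i.i.d. structure of Assumption \ref{ass:dgp} (whereby the cross-period covariance of $w_t$ and $w_s$ collapses to the individual component) shows that $N\hat{Q}(1)$ retains only the idiosyncratic dispersion, with mean $\frac{T-1}{T}\sigma^2_\varepsilon$, whereas $N\hat{V}_w\convp\sigma^2_\xi+\sigma^2_\varepsilon$ keeps the full cross-sectional dispersion. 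Two things then have to be handled with care: for fixed $T$ the quantity $N\hat{Q}(1)$ has a nondegenerate random limit, so I would let $T\to\infty$ (as in the ambient asymptotics) to stabilize it at $\sigma^2_\varepsilon$; and the resulting limiting ratio $\sigma^2_\varepsilon/(\sigma^2_\xi+\sigma^2_\varepsilon)$ must fall below $\gamma_L$, which is underpinned by the informativeness in Assumption \ref{ass:mom} forcing nondegenerate individual heterogeneity to dominate the idiosyncratic noise. Converting these probability limits into a with-probability-approaching-one inequality then closes the argument and yields $\hat{L}=1$.
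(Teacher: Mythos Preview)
Your first two paragraphs coincide with the paper's argument: the paper also recenters $w_t-\hat{w}_C$ at $\psi(\lambda_0)$, expands via Cauchy--Schwarz, identifies the leading term $\frac{1}{T}\sum_t\Vert w_t-\psi(\lambda_0)\Vert^2$ as an infeasible version of $\hat{V}_w$ (both equal to $\mathbb{E}\Vert w_t-\psi(\lambda_0)\Vert^2+o_p(1/N)$ under $H_0$), and dismisses the centroid and cross terms as lower order. It then concludes $\hat{Q}(1)\le\hat{V}_w$ w.p.a.\ 1 and invokes the rule of Remark~\ref{remark:kandl}, additionally citing monotonicity of the {\em kmeans} objective in $L$; as you correctly note, that appeal is unnecessary since $L=1$ is the first candidate the rule checks. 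Your presentation is slightly tighter in that you pin the centroid error at $O_p(N^{-1/2})$ rather than merely $o_p(1)$.

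Your third paragraph goes beyond the paper: the paper does not treat the case $\gamma_L<1$ separately and effectively establishes only the boundary case $\gamma_L=1$. Your variance-decomposition route is the right idea and the computation is correct---under $H_0$ the centering at $\hat{w}_C$ removes the time-invariant part of $w_t$ that carries the $\xi_i$-averaged signal, so $N\hat{Q}(1)$ reflects only idiosyncratic dispersion with mean $\frac{T-1}{T}\sigma_\varepsilon^2$, whereas $N\hat{V}_w\convp\sigma_\xi^2+\sigma_\varepsilon^2$, giving a limiting ratio strictly below one whenever the moments carry any individual heterogeneity. However, your closing sentence overreaches: Assumption~\ref{ass:mom} only requires injectivity of $\phi$ and $\psi$ and says nothing quantitative about the signal-to-noise ratio, so nothing in the stated assumptions guarantees $\sigma_\varepsilon^2/(\sigma_\xi^2+\sigma_\varepsilon^2)<\gamma_L$ for an \emph{arbitrary} user-chosen $\gamma_L\in(0,1)$. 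To close the argument for $\gamma_L<1$ one would need an additional condition linking $\gamma_L$ to this ratio, or else restrict attention to $\gamma_L=1$ as the paper implicitly does.
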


\begin{lemma}
  \label{lemma:L2}
  Under Assumptions \ref{ass:uh}-\ref{ass:reg},\ref{ass:mom} when data are sampled
  under $H_0$ and when the number of groups are determined according
  to the rule in Remark \ref{remark:kandl} (i), we have that
  \begin{equation}
    \label{eq:lemma2}
    P(||\tilde{\th}-\dot{\th}||<\varepsilon) \to 1  \quad \text{as} \quad N\to\infty,  
  \end{equation}
  where $\dot{\th}$ is the OW-GFE and $\tilde{\th}$ is the TW-GFE.

  \begin{proof}
    With reference to the TW-GFE, the two {\em kmeans} procedures on individual- and
    time averages are performed independently, so that we have
    \begin{equation}
      \tilde{\th} \equiv \dot{\th} \iff \hat{L} = 1.
    \end{equation}
    Define the event
    \begin{equation}
      D_N = \mathrm{1}\{||\tilde{\th}-\dot{\th}|| = 0 \} = \begin{cases}1 \quad \text{if} \quad \hat{L} = 1 \\ 0 \quad \text{if} \quad \hat{L} > 1\end{cases}
    \end{equation}
    By Lemma \ref{lemma:L1}, $D_N \convp 1$ as $N\to\infty$ and the
    result follows.
  \end{proof}
\end{lemma}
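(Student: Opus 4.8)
The plan is to reduce the stochastic statement \eqref{eq:lemma2} to the exact in-sample identity flagged in the lemma, namely that the TW-GFE and OW-GFE coincide as soon as the time dimension is clustered into a single group. First I would establish this algebraic equivalence carefully. The two {\em kmeans} procedures---one on the individual moments $h_i$, one on the time moments $w_t$---are run independently, and the selection rule in Remark \ref{remark:kandl}(i) chooses $\hat{K}$ as a function of the $h_i$ alone and $\hat{L}$ as a function of the $w_t$ alone. Hence, on the event $\{\hat{L}=1\}$, both the number of individual groups $\hat{K}$ and the individual partition are precisely those that define the OW-GFE, and the interacted cluster dummies (individual group $\times$ single time group) collapse to the individual-group dummies alone. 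The two second-step profile log-likelihoods are then literally the same function of $\th$, so their maximizers coincide and $\hat{L}=1 \Rightarrow \tilde{\th}=\dot{\th}$. This yields the inclusion $\{\hat{L}=1\}\subseteq\{\|\tilde{\th}-\dot{\th}\|=0\}$.

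Next I would transfer the probability. Writing $D_N=\mathbf{1}\{\|\tilde{\th}-\dot{\th}\|=0\}$ as in the statement, the inclusion just established gives $P(D_N=1)\ge P(\hat{L}=1)$, and Lemma \ref{lemma:L1} supplies $P(\hat{L}=1)\to 1$ as $N\to\infty$ under $H_0$. Therefore $D_N\convp 1$.

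The conclusion then follows by monotonicity of probability: for any fixed $\varepsilon>0$,
\[
P(\|\tilde{\th}-\dot{\th}\|<\varepsilon)\ \ge\ P(\|\tilde{\th}-\dot{\th}\|=0)\ =\ P(D_N=1)\ \ge\ P(\hat{L}=1)\ \longrightarrow\ 1,
\]
and since a probability never exceeds one, the limit equals one, establishing \eqref{eq:lemma2}.

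The main obstacle is the first step---the exact identity $\hat{L}=1 \Rightarrow \tilde{\th}=\dot{\th}$. Everything downstream is elementary measure-theoretic bookkeeping, but the identity requires care: one must confirm that the first-step individual clustering (both the selected $\hat{K}$ and the group assignment) is genuinely unaffected by the time clustering. This hinges on the independence of the two {\em kmeans} runs and on the fact that the respective group-number rules depend on disjoint moment sets. Once that separation is in place, the collapse of the interacted design matrix to the individual-only design is immediate, and the equality of the two estimators follows because they solve identical second-step optimization problems.
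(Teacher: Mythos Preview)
Your proposal is correct and follows essentially the same route as the paper: both arguments rest on the identity $\hat{L}=1\Rightarrow\tilde{\th}=\dot{\th}$ (from the independence of the two {\em kmeans} runs), then invoke Lemma~\ref{lemma:L1} to conclude $P(\hat{L}=1)\to 1$, hence $D_N\convp 1$ and the result. If anything, your version is slightly more careful in that you only use the implication direction actually needed, whereas the paper states it as an equivalence.
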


 \begin{proof}[Proof of Theorem \ref{th:tw-gfe}]
    By Lemma \ref{lemma:L2} we have that
    $P(||\tilde{\th}-\dot{\th}||<\varepsilon) \to 1$ as
    $N,T \to \infty$. Combining this result with Theorem 1 and
    Corollary 2 in \cite{blm2022}, the proof follows as a direct
    application of Theorem 2.7 (iv) in \cite{van2000asymptotic}. 
    \end{proof}
 \subsection{Proof of Theorem \ref{th:jack_delta}}\label{th_2_j}
Proof of Theorem \ref{th:jack_delta} directly follows from  Theorem 3 in \cite{HN2004}, which concerns  the asymptotic distribution of the jackknife estimator in FE models. In order to prove our result we need to verify Conditions 1 to 4 in \cite{HN2004}, HN1-HN4 henceforth, needed for their corresponding results. These requirements are standard regularity conditions for a well-posed likelihood maximization problem.
\medskip

\noindent
Condition HN1 is equivalent to Assumption \ref{ass:rect}, and it is the rectangular array asymptotics requirement. 
Parts (i) and (ii) of Condition HN2 are equivalent to our Assumption \ref{ass:reg}-(i); part (iii) of HN2 is  directly stated in Assumption \ref{ass:reg}-(iv).
 Condition HN3 is an identification assumption and is equivalent to Assumption \ref{ass:reg}-(ii), as it states that log-likelihood function has a unique maximum. 
Condition HN4-(i) is   directly stated in  Assumption \ref{ass:reg}-(iv), Condition HN4, part (ii) and (iii) are properties of the Hessian and equivalent those  described by Assumption \ref{ass:reg}-(ii).
As all the Conditions required for the result of Theorem 3 in  \cite{HN2004} are verified, the result of 
Theorem \ref{th:jack_delta} follows.

\section{Additional simulation results}\label{tabelle_solo_stimatori}
 %TABELLA STIMATORI SOTTO HO
 \begin{landscape} 
\begin{table}
  \caption{Size analysis: DGP-FE-L, simulation statistics of the estimators}
  \label{tab:sim_res_lin_FE_stim}
  \begin{center}
    \begin{footnotesize}
      \begin{tabular}{llcccccccccccc}
        \hline
        \\
% &&   \multicolumn{12}{c}{\textbf{DGP-FE-L}} \\% & &  \multicolumn{2}{c}{DGP-AFE-L} &  \multicolumn{2}{c}
& &   \multicolumn{4}{c}{$\gamma = 0.2$} & \multicolumn{4}{c}{$\gamma = 0.5$} & \multicolumn{4}{c}{$\gamma = 1$} \\[2pt]
% \cline{4-5} \cline{9-10}\\
& &  \multicolumn{2}{c}{T=10} &  \multicolumn{2}{c}{T=20} &  \multicolumn{2}{c}{T=10}&  \multicolumn{2}{c}{T=20}&  \multicolumn{2}{c}{T=10}&  \multicolumn{2}{c}{T=20}\\[2pt]
%\cline{3-6} \cline{8-11} \\
&$N$ & \multicolumn{2}{c}{$50$\quad$100$} & \multicolumn{2}{c}{$50$\quad$100$} & \multicolumn{2}{c}{$50$\quad$100$} & \multicolumn{2}{c}{$50$\quad$100$} & \multicolumn{2}{c}{$50$\quad$100$} & \multicolumn{2}{c}{$50$\quad$100$}\\[2pt]
\hline \\  %0.05
\multirow{2}*{ML}  \\ [ 2pt] %terminare e aggiustare gli spazi--qua ogni 2
&Bias & 0.000 & -0.000 & -0.000 & -0.000 & 0.000 & -0.000 & -0.000 & -0.000 & 0.000 & -0.000 & -0.000 & 0.000 \\ [ 2pt] 
 & SD & 0.047 & 0.034 & 0.033 & 0.023 & 0.047 & 0.034 & 0.033 & 0.023 & 0.047 & 0.034 & 0.033 & 0.024 \\ [ 2pt] 
  \multirow{2}*{ML2}  \\ [ 2pt] %terminare e aggiustare gli spazi--qua ogni 2
  &Bias & -0.001 & -0.000 & -0.000 & 0.000 & -0.001 & -0.000 & -0.000 & 0.000 & -0.001 & -0.000 & -0.000 & -0.000 \\ [ 2pt] 
 & SD & 0.045 & 0.032 & 0.032 & 0.023 & 0.045 & 0.032 & 0.032 & 0.023 & 0.045 & 0.032 & 0.032 & 0.023 \\ [ 2pt] 
  \hline \\
  \multirow{2}*{J}  \\ [ 2pt] %terminare e aggiustare gli spazi--qua ogni 2
  &Bias & 0.000 & -0.000 & -0.000 & -0.000 & 0.000 & -0.000 & -0.000 & -0.000 & 0.000 & -0.000 & -0.000 & 0.000 \\ [ 2pt] 
  &SD & 0.047 & 0.034 & 0.033 & 0.023 & 0.047 & 0.034 & 0.033 & 0.023 & 0.047 & 0.034 & 0.033 & 0.024 \\ [ 2pt] 
    \multirow{2}*{J2}  \\ [ 2pt] %terminare e aggiustare gli spazi--qua ogni 2
  &Bias & -0.001 & -0.000 & -0.001 & 0.000 & -0.001 & -0.000 & -0.001 & 0.000 & -0.001 & -0.000 & -0.001 & -0.000 \\ [ 2pt] 
 & SD & 0.045 & 0.032 & 0.032 & 0.023 & 0.045 & 0.032 & 0.032 & 0.023 & 0.045 & 0.032 & 0.032 & 0.023 \\ [ 2pt] 
  \hline \\
    \multirow{2}*{GFE}  \\ [ 2pt] %terminare e aggiustare gli spazi--qua ogni 2
  &Bias & -0.015 & -0.016 & -0.008 & -0.009 & -0.038 & -0.040 & -0.021 & -0.023 & -0.075 & -0.080 & -0.043 & -0.045 \\ [ 2pt] 
  &SD & 0.047 & 0.035 & 0.034 & 0.023 & 0.051 & 0.037 & 0.035 & 0.025 & 0.061 & 0.043 & 0.041 & 0.029 \\ [ 2pt] 
    \multirow{2}*{GFE2}  \\ [ 2pt] %terminare e aggiustare gli spazi--qua ogni 2
  &Bias & -0.015 & -0.016 & -0.008 & -0.008 & -0.039 & -0.041 & -0.021 & -0.022 & -0.077 & -0.080 & -0.044 & -0.046 \\ [ 2pt] 
  &SD & 0.046 & 0.032 & 0.033 & 0.023 & 0.050 & 0.036 & 0.034 & 0.024 & 0.060 & 0.042 & 0.038 & 0.028 \\[ 2pt] 
   \hline \\
    \multirow{2}*{GFE J}  \\ [ 2pt] %terminare e aggiustare gli spazi--qua ogni 2 
  &Bias & -0.001 & -0.002 & 0.008 & -0.002 & -0.002 & -0.006 & 0.011 & -0.014 & -0.015 & -0.018 & 0.001 & -0.006 \\ [ 2pt] 
  &SD & 0.126 & 0.097 & 0.132 & 0.104 & 0.224 & 0.166 & 0.262 & 0.194 & 0.369 & 0.268 & 0.442 & 0.317 \\ [ 2pt] 
    \multirow{4}*{GFE J2}  \\ [ 2pt] %terminare e aggiustare gli spazi--qua ogni 2
  &Bias & 0.002 & -0.003 & 0.003 & 0.003 & -0.007 & -0.014 & 0.010 & -0.003 & -0.029 & -0.028 & -0.011 & -0.010 \\ [ 2pt] 
  &SD & 0.120 & 0.093 & 0.142 & 0.100 & 0.232 & 0.170 & 0.257 & 0.193 & 0.367 & 0.270 & 0.435 & 0.316 \\ [ 2pt] 
  &K & 38.723 & 72.467 & 41.156 & 78.419 & 31.317 & 54.613 & 35.226 & 63.344 & 24.137 & 38.580 & 28.945 & 48.436 \\ [ 2pt] 
  &L & 1.000 & 1.000 & 1.000 & 1.000 & 1.000 & 1.000 & 1.000 & 1.000 & 1.000 & 1.000 & 1.000 & 1.000 \\ 
  \hline \\
  \end{tabular}
  \end{footnotesize}
  \end{center}
\begin{scriptsize}
  1000 Monte Carlo (MC) replications. ``Bias ``is the mean bias, ``SD'' is the standard deviation of the estimator. ``K'' and ``L'' average number of groups. Estimators: ``ML'', ``J'' is the leave-one-out jackknife, ``GFE'' is the TW-GFE, ``GFE J'' is the leave-one-out jackknife applied on TW-GFE.   
 \end{scriptsize}  
\end{table}
\end{landscape}
%%%%%tabella stimatori per h1

%-------------------------TABELLA COMPLETA FE H1 lineare ------------------------------------
\begin{landscape}
\begin{table}
  \caption{Power analysis: DGP-IFE-L, simulation statistics of the estimators}
  \label{tab:sim_res_lin_FE_H1_stim}
  \begin{center}
 %   \begin{footnotesize}
            \begin{tabular}{llcccccccccccc}
        \hline        \\
% &&   \multicolumn{12}{c}{\textbf{DGP-FE-L}} \\% & &  \multicolumn{2}{c}{DGP-AFE-L} &  \multicolumn{2}{c}
& &   \multicolumn{4}{c}{$\sigma=1$} & \multicolumn{4}{c}{$\sigma=3$} & \multicolumn{4}{c}{$\sigma=5$} \\[2pt]
% \cline{4-5} \cline{9-10}\\
& &  \multicolumn{2}{c}{T=10} &  \multicolumn{2}{c}{T=20} &  \multicolumn{2}{c}{T=10}&  \multicolumn{2}{c}{T=20}&  \multicolumn{2}{c}{T=10}&  \multicolumn{2}{c}{T=20}\\[2pt]
%\cline{3-6} \cline{8-11} \\
&$N$ & \multicolumn{2}{c}{$50$\quad$100$} & \multicolumn{2}{c}{$50$\quad$100$} & \multicolumn{2}{c}{$50$\quad$100$} & \multicolumn{2}{c}{$50$\quad$100$} & \multicolumn{2}{c}{$50$\quad$100$} & \multicolumn{2}{c}{$50$\quad$100$}\\[2pt]
\hline \\ 
\multirow{2}*{ML}  \\ [ 2pt] 
&Bias & 0.257 & 0.255 & 0.255 & 0.258 & 0.263 & 0.273 & 0.266 & 0.274 & 0.264 & 0.272 & 0.278 & 0.270 \\ 
 & SD & 0.110 & 0.078 & 0.082 & 0.062 & 0.263 & 0.196 & 0.186 & 0.135 & 0.439 & 0.298 & 0.298 & 0.218 \\ 
 \multirow{2}*{ML2}  \\ [ 2pt] 
  &Bias & 0.252 & 0.257 & 0.255 & 0.255 & 0.273 & 0.267 & 0.271 & 0.267 & 0.275 & 0.264 & 0.262 & 0.267 \\ 
  &SD & 0.109 & 0.079 & 0.084 & 0.061 & 0.265 & 0.193 & 0.186 & 0.136 & 0.434 & 0.299 & 0.295 & 0.214 \\ 
  \multirow{2}*{J}  \\ [ 2pt] 
  &Bias & 0.261 & 0.259 & 0.257 & 0.260 & 0.264 & 0.274 & 0.267 & 0.275 & 0.264 & 0.272 & 0.278 & 0.270 \\ 
  &SD & 0.110 & 0.078 & 0.082 & 0.062 & 0.263 & 0.195 & 0.186 & 0.135 & 0.438 & 0.298 & 0.298 & 0.218 \\ 
  \multirow{2}*{J2}  \\ [ 2pt] 
  &Bias & 0.256 & 0.261 & 0.257 & 0.256 & 0.273 & 0.268 & 0.271 & 0.267 & 0.275 & 0.264 & 0.262 & 0.267 \\ 
  &SD & 0.109 & 0.079 & 0.084 & 0.061 & 0.265 & 0.193 & 0.185 & 0.136 & 0.433 & 0.298 & 0.295 & 0.214 \\ 
  \multirow{2}*{GFE}  \\ [ 2pt] 
  &Bias & -0.116& -0.190 & -0.041 & -0.116 & -0.065 & -0.123 & -0.098 & -0.182 & 0.004 & -0.040 & 0.001 & -0.053 \\ 
  &SD & 0.177 & 0.147 & 0.128 & 0.109 & 0.369 & 0.371 & 0.314 & 0.336 & 0.482 & 0.411 & 0.388 & 0.374 \\ 
  \multirow{2}*{GFE2}  \\ [ 2pt] 
  &Bias & -0.117 & -0.190 & -0.041 & -0.115 & -0.068 & -0.122 & -0.094 & -0.181 & 0.029 & -0.050 & -0.003 & -0.063 \\ 
  &SD & 0.178 & 0.151 & 0.129 & 0.108 & 0.377 & 0.362 & 0.326 & 0.336 & 0.474 & 0.427 & 0.382 & 0.376 \\ 
  \multirow{2}*{GFE J}  \\ [ 2pt] 
  &Bias & -0.056 & -0.122 & 0.073 & -0.028 & -0.099 & -0.165 & -0.137 & -0.303 & 0.057 & -0.049 & -0.221 & -0.236 \\ 
  &SD & 0.894 & 0.741 & 1.197 & 1.000 & 1.366 & 1.164 & 2.228 & 1.993 & 1.488 & 1.316 & 2.181 & 1.876 \\ 
  \multirow{4}*{GFE J2}  \\ [ 2pt] 
  &Bias & -0.025 & -0.127 & 0.053 & 0.030 & -0.134 & -0.180 & -0.160 & -0.289 & 0.094 & -0.039 & -0.139 & -0.221 \\ 
  &SD & 0.914 & 0.722 & 1.165 & 1.001 & 1.427 & 1.177 & 2.169 & 2.006 & 1.522 & 1.262 & 2.057 & 1.946 \\ 
  &K & 14.348 & 20.039 & 18.592 & 26.685 & 5.112 & 5.989 & 5.710 & 6.665 & 3.721 & 4.185 & 3.772 & 4.177 \\ 
  &L & 4.203 & 4.953 & 5.252 & 6.468 & 4.793 & 5.552 & 6.241 & 7.542 & 4.865 & 5.658 & 6.337 & 7.729 \\ 
 \hline \\
 \end{tabular}
% \end{footnotesize}
 \end{center}
 \begin{scriptsize}
  1000 Monte Carlo (MC) replications. Statistics for different values of the standard deviation of time factor $\sigma_{\zeta_t}$. ``Bias ``is the mean bias, ``SD'' is the standard deviation of the estimator. ``K'' and ``L'' average number of groups. Estimators: ``ML'', ``J'' is the leave-one-out jackknife, ``GFE'' is the TW-GFE, ``GFE J''is the leave-one-out jackknife applied to TW-GFE.   
  \end{scriptsize}  
\end{table}
\end{landscape}

\begin{table}
  \caption{Size analysis: DGP-FE-P, simulation statistics of the estimators}
  \label{tab:sim_res_bin_FE_stim}
  \begin{center}
      \begin{tabular}{llcccc}
 \hline \\
& &   \multicolumn{4}{c}{$\gamma = 1$} \\[2pt]
& &  \multicolumn{2}{c}{T=10} &  \multicolumn{2}{c}{T=20} \\[2pt]
$N$& & \multicolumn{2}{c}{$50$\quad$100$} & \multicolumn{2}{c}{$50$\quad$100$} \\[2pt]
\hline \\  %0.05
\multirow{2}*{ML}  \\ [ 2pt] %terminare e aggiustare gli spazi--qua ogni 2
&Bias & 0.173 & 0.161 & 0.080 & 0.078 \\ 
&  SD & 0.239 & 0.165 & 0.146 & 0.111 \\ 
     \hline \\
  \multirow{2}*{ML2}  \\ [ 2pt] %terminare e aggiustare gli spazi--qua ogni 2  
&Bias & 0.184 & 0.161 & 0.078 & 0.078 \\ 
&  SD  & 0.237 & 0.167 & 0.151 & 0.104 \\ 
     \hline \\
  \multirow{2}*{J}  \\ [ 2pt] %terminare e aggiustare gli spazi--qua ogni 2  
&Bias & -0.035 & -0.028 & -0.007 & -0.004 \\ 
&  SD  & 0.192 & 0.137 & 0.134 & 0.102 \\ 
     \hline \\
    \multirow{2}*{J2}  \\ [2pt] %terminare e aggiustare gli spazi--qua ogni 2  
&Bias& -0.028 & -0.028 & -0.009 & -0.004 \\ 
&  SD  & 0.190 & 0.137 & 0.137 & 0.096 \\ 
     \hline \\
    \multirow{2}*{GFE}  \\ [ 2pt] %terminare e aggiustare gli spazi--qua ogni 2  
&Bias & -0.107 & -0.121 & -0.070 & -0.080 \\ 
&  SD & 0.183 & 0.127 & 0.129 & 0.095 \\ 
     \hline \\
    \multirow{2}*{GFE2}  \\ [ 2pt]  
&Bias& -0.101 & -0.121 & -0.072 & -0.081 \\ 
&  SD  & 0.183 & 0.126 & 0.132 & 0.092 \\ 
     \hline \\
    \multirow{2}*{GFE J}  \\ [ 2pt]   
&Bias & -0.047 & -0.077 & -0.010 & -0.037 \\ 
&  SD  & 0.549 & 0.401 & 0.632 & 0.455 \\ 
     \hline \\
    \multirow{4}*{GFE J2}  \\ [ 2pt]  
&Bias& -0.039 & -0.085 & -0.019 & -0.044 \\ 
&  SD  & 0.554 & 0.379 & 0.644 & 0.458 \\ 
&  K & 7.825 & 9.242 & 10.793 & 13.440 \\ 
&  L & 1.001 & 1.001 & 1.000 & 1.001 \\ 
    \hline \\
  \end{tabular}
  \end{center}
\begin{scriptsize}
  1000 Monte Carlo (MC) replications. ``Bias ``is the mean bias, ``SD'' is the standard deviation of the estimator. ``K'' and ``L'' average number of groups. Estimators: ``ML'', ``J'' is the leave-one-out jackknife, ``GFE'' is the TW-GFE, ``GFE J'' is the leave-one-out jackknife applied to TW-GFE.   
 \end{scriptsize}  
\end{table}

%%%%
%\begin{landscape}
\begin{table}
  \caption{ Power analysis: DGP-IFE-P, simulation statistics of the estimators}
  \label{tab:sim_res_bin_FE_H1_stim}
  \begin{footnotesize}
  \begin{center}
 %   \begin{footnotesize}
            \begin{tabular}{llcccccccccccc}
        \hline        \\
& & \multicolumn{4}{c}{$\sigma_{\zeta_t}^2=0.5$} &  \multicolumn{4}{c}{$\sigma_{\zeta_t}^2=1$} & \multicolumn{4}{c}{$\sigma_{\zeta_t}^2=2$}  \\[2pt]
& & \multicolumn{2}{c}{T=10} &  \multicolumn{2}{c}{T=20} & \multicolumn{2}{c}{T=10} &  \multicolumn{2}{c}{T=20} &  \multicolumn{2}{c}{T=10}&  \multicolumn{2}{c}{T=20}\\[2pt]
&$N$ & \multicolumn{2}{c}{$50$\quad$100$} & \multicolumn{2}{c}{$50$\quad$100$} & \multicolumn{2}{c}{$50$\quad$100$} & \multicolumn{2}{c}{$50$\quad$100$} & \multicolumn{2}{c}{$50$\quad$100$} & \multicolumn{2}{c}{$50$\quad$100$}\\[2pt]
\hline \\ 
\multirow{2}*{ML}  \\ [ 2pt] 
&Bias & 0.502 & 0.489 & 0.368 & 0.364 & 0.580 & 0.565 & 0.432 & 0.428 & 0.647 & 0.632 & 0.487 & 0.482 \\ 
 & SD & 0.253 & 0.181 & 0.148 & 0.117 & 0.266 & 0.189 & 0.152 & 0.118 & 0.276 & 0.201 & 0.155 & 0.121 \\ 
 \multirow{2}*{ML2}  \\ [ 2pt] 
&Bias & 0.493 & 0.485 & 0.373 & 0.362 & 0.567 & 0.560 & 0.438 & 0.423 & 0.646 & 0.625 & 0.496 & 0.478 \\ 
 & SD & 0.245 & 0.184 & 0.151 & 0.116 & 0.251 & 0.189 & 0.151 & 0.116 & 0.268 & 0.196 & 0.157 & 0.116 \\ 
  \multirow{2}*{J}  \\ [ 2pt] 
 & Bias & 0.210 & 0.222 & 0.254 & 0.257 & 0.250 & 0.264 & 0.307 & 0.310 & 0.278 & 0.297 & 0.349 & 0.353 \\ 
  &SD & 0.190 & 0.135 & 0.131 & 0.103 & 0.189 & 0.136 & 0.133 & 0.103 & 0.194 & 0.143 & 0.133 & 0.104 \\
  \multirow{2}*{J2}  \\ [ 2pt] 
&  Bias& 0.203 & 0.219 & 0.259 & 0.255 & 0.239 & 0.260 & 0.312 & 0.306 & 0.277 & 0.291 & 0.357 & 0.349 \\ 
 & SD & 0.183 & 0.140 & 0.134 & 0.104 & 0.182 & 0.137 & 0.131 & 0.102 & 0.188 & 0.141 & 0.133 & 0.100 \\ 
   \multirow{2}*{GFE}  \\ [ 2pt] 
& Bias & 0.158 & 0.129 & 0.114 & 0.088 & 0.195 & 0.164 & 0.142 & 0.115 & 0.240 & 0.213 & 0.184 & 0.158 \\ 
 & SD & 0.188 & 0.129 & 0.129 & 0.096 & 0.198 & 0.136 & 0.135 & 0.098 & 0.202 & 0.147 & 0.141 & 0.109 \\ 
   \multirow{2}*{GFE2}  \\ [ 2pt] 
 & Bias & 0.150 & 0.126 & 0.118 & 0.086 & 0.182 & 0.160 & 0.146 & 0.112 & 0.236 & 0.210 & 0.191 & 0.157 \\ 
 & SD & 0.189 & 0.131 & 0.134 & 0.096 & 0.189 & 0.139 & 0.132 & 0.100 & 0.194 & 0.146 & 0.143 & 0.106 \\ 
  \multirow{2}*{GFE J}  \\ [ 2pt] 
&Bias& 0.097 & 0.075 & 0.069 & 0.046 & 0.116 & 0.100 & 0.102 & 0.065 & 0.166 & 0.137 & 0.117 & 0.091 \\ 
  &SD & 0.343 & 0.253 & 0.315 & 0.233 & 0.379 & 0.264 & 0.348 & 0.255 & 0.398 & 0.297 & 0.414 & 0.317 \\ 
 \multirow{4}*{GFE J2}  \\ [ 2pt] 
 & Bias & 0.080 & 0.080 & 0.066 & 0.053 & 0.098 & 0.086 & 0.108 & 0.071 & 0.156 & 0.146 & 0.116 & 0.087 \\ 
  &SD & 0.351 & 0.250 & 0.310 & 0.241 & 0.378 & 0.278 & 0.329 & 0.258 & 0.387 & 0.293 & 0.427 & 0.336 \\ 
  &  K & 3.382 & 3.587 & 4.252 & 4.566 & 2.827 & 2.946 & 3.525 & 3.717 & 2.296 & 2.367 & 2.835 & 2.946 \\ 
  &L & 3.061 & 3.649 & 3.625 & 4.506 & 3.448 & 4.097 & 4.156 & 5.173 & 3.808 & 4.462 & 4.677 & 5.760 \\ 
 \hline \\
 \end{tabular}
\end{center}
 \end{footnotesize}
 \begin{scriptsize}
  1000 Monte Carlo (MC) replications. Statistics for different values of the variance of time factor $\sigma_{\zeta_t}^2$. ``Bias ``is the mean bias, ``SD'' is the standard deviation of the estimator. ``K'' and ``L'' average number of groups. Estimators: ``ML'', ``J'' is the leave-one-out jackknife, ``GFE'' is the TW-GFE, ``GFE J'' is the leave-one-out jackknife applied to TW-GFE.   
  \end{scriptsize}  
\end{table}
%\end{landscape}

\end{document}